\documentclass{stylefile}

\usepackage{amsmath,amssymb,amsthm,enumerate,multirow,subfigure,url}
\usepackage[usenames]{color}

\renewenvironment{proof}{\noindent
\textsc{{\bf Proof.}}}{\bigskip}

\newtheorem{observation}{Observation}
\newtheorem{theorem}{Theorem}
\newtheorem{remark}{Remark}
\newtheorem{example}{Example}

\newcommand{\R}[0]{\mathbb{R}}
\newcommand{\N}[0]{\mathbb{N}}
\newcommand{\E}[0]{{\cal E}}
\newcommand{\ute}{UTE_{\instance}}
\newcommand{\lte}{LTE_{\instance}}
\newcommand{\uts}{UTS_{\instance}}
\newcommand{\lts}{LTS_{\instance}}
\newcommand{\ut}[2]{u_{#1}({#2})}
\newcommand{\nut}[2]{u'_{#1}({#2})}
\newcommand{\utb}[2]{\bar{u}_{#1}({#2})}
\newcommand{\nutb}[2]{\bar{u}'_{#1}({#2})}
\newcommand{\lt}[2]{l_{#1}({#2})}
\newcommand{\nlt}[2]{l'_{#1}({#2})}
\newcommand{\ltb}[2]{\bar{l}_{#1}({#2})}
\newcommand{\nltb}[2]{\bar{l}'_{#1}({#2})}

\newcommand{\instance}[0]{{\cal P}}
\newcommand{\newinstance}[2]{\instance_{c_{{#1},{#2}}}}
\newcommand{\maxdec}[1]{\delta(#1)}
\newcommand{\Dminus}[1]{D_{-}(#1)}
\newcommand{\Dplus}[1]{D_{+}(#1)}
\newcommand{\costname}[0]{f_c}
\newcommand{\costs}[1]{\costname(#1)}
\newcommand{\newcostname}[1]{{f_{c_{#1}}}}
\newcommand{\newcosts}[2]{{f_{c_{#1}}}(#2)}

\setlength{\parindent}{0cm}

\begin{document}

\begin{frontmatter}

\title{Extending the definition of single and set tolerances}

\author[Umea]{Gerold J\"ager}
\address[Umea]{Department of Mathematics and
Mathematical Statistics\\
University of Ume{\aa}\\
SE-90187 Ume{\aa}, Sweden\\
{\rm gerold.jager@umu.se}
}

\author[Aarhus]{Marcel Turkensteen}
\address[Aarhus]{Department of Economics and Business Economics\\
School of Business and Social Sciences\\
University of Aarhus\\
Fuglesangs Alle 4\\
DK-8210 Aarhus, Denmark\\
{\rm matu@econ.au.dk}
}

\begin{abstract}
Optimal solutions of combinatorial optimization problems can be sensitive to changes in the cost of one or more
elements of the ground set $\E$.
Single and set tolerances measure the supremum / infimum possible change such that
the current solution remains optimal
for cost changes in one or more elements. The current definition
does not apply to all elements of $\E$ or to all subsets of $\E$. In this work, we
broaden the definition to all elements for single tolerances and
to all subsets of elements for set tolerances, while proving that
key theoretical and computational properties still apply.
\begin{keyword}
Sensitivity analysis, single tolerance, set tolerance
\end{keyword}
\end{abstract}

\end{frontmatter}

\section{Introduction}
\label{sec_intro}

Many studies consider the effect of parameter changes
on the optimality of solutions. Greenberg~\cite{Gre94} performed sensitivity analysis
(there it is  called postoptimal analysis) on linear programming problems.
Andersen et al.~\cite{AKR23} developed a computational approach for determining
under which parameter values Mixed Integer Linear Programming solutions remain optimal,
but if multiple parameters change, it is necessary to solve a number of problems
which is exponential in the number of changing parameter values.

In this work, we consider combinatorial optimization problems (COPs), in which
one selects a combination of elements from the ground set to minimize (maximize)
some objective function based on the costs of the individual elements.
Examples of COPs are given in~\cite{CCPS97} and include decision problems in Binary Programming.
The objective is often to minimize (maximize) the sum of the costs of elements in a solution,
but one can also minimize the maximum (bottleneck) cost, as in~\cite{KZ13}, or the product of
the costs (see~\cite{JT24} for applications).
Further relevant studies on sensitivity analysis for COPs can be found in~\cite{FV18,MWMS22}.

In previous studies~\cite{GJM06b,JT18,JT24}, we developed a theoretical framework
called \emph{theory of tolerances}, in which we determined the limits to the changes
in the costs of elements or sets of elements such current solutions remain optimal and thus other solutions become
optimal. These limits are called \emph{tolerances}.

In case of parameter changes of single elements,
so-called \emph{single tolerances} provide the supremum increase and decrease, respectively, in
which optimal solutions remain optimal.
In case of multiple simultaneous parameter changes, solutions remain optimal within some region
formed by combinations of individual cost changes. However, determining this region is a laborious
task~\cite{AKR23}. The theoretical framework
includes measures of the size of the area, called \emph{set tolerances}, which are also
independent of the optimal solution.
In fact, the theoretical framework shows that
the introduced single and set tolerances are independent
of the choice of the optimal solution. This is beneficial, because cost changes could
make some but not all optimal solutions non-optimal.

The current definitions of single and set tolerances have the drawback that they
only apply to specific  elements or specific subsets of elements of  the ground set. As a consequence, if there is a
set of elements in which all costs either increase or decrease, we first have to check whether the
tolerance is defined, and we may discover that this is not the case.

In this work we generalize the definitions and extend the theory of single and set
tolerances in such a way that this drawback no longer holds. We show that all
extended / new definitions are consistent
with the corresponding current ones.
We further prove that several previous results, in particular exact values and upper and lower
bounds, can be generalized for the new definitions so that the main part of the previous
theory can be kept.

In this work, we aim to provide new definitions of single and
set tolerances, which
\begin{itemize}
\item are consistent with the current definitions on their
respective domains,
\item can be computed conveniently (i.e., previous computational
results should be applicable to all
elements/subsets).
\end{itemize}

\section{Notations, definitions, and existing results}
\label{sec_notate}

For the sake of completeness and readability of this work,
we present notations and the definitions of
single upper/lower tolerances, based on~\cite{GJM06b}, of
regular set upper/lower tolerances, based on~\cite{JT18},
and of reverse set upper/lower tolerances,
based on~\cite{JT24}.

\subsection{Combinatorial minimization problems}

Formally, a {\emph combinatorial minimization problem (CMP)} $\instance$
is given by a tuple
$(\E,D,c,\costname)$ where
$\E$ is a finite \emph{ground set of elements},
$D\subseteq 2^\E \setminus \{ \emptyset \} $ is
the set of \emph{feasible solutions},
$c:\,\E\to \R$ is the \emph{cost function}
which assigns costs to each single element \mbox{of $\E$},
$\costname:\,D \to \R$ is
a function which depends on the function~$c$ and assigns
costs to each feasible solution $D$.

$S^\ast\subseteq \E$ is called an \emph{optimal solution} of
$\instance$ if $S^\ast$ is a feasible solution and the costs
$\costs{S^\ast}$ of $S^\ast$ are infimum.
We denote the cost of an optimal solution
$S^\ast$ of $\instance$ by $\costs{\instance}$
and the set of optimal solutions by $D^\ast$.
Moreover, we require that there is at least one optimal solution of $\instance$, i.\,e., $D^\ast\not=\emptyset$.

We consider CMPs $\instance=(\E,D,c,\costname)$
where
$\costname:\, D\to\R$ is of type
\emph{sum}, i.e., the sum of all costs is minimized and which we call
a \emph{combinatorial sum problem (CSP)},
where it is of type \emph{product}, i.e., the product of all (positive) costs is minimized
and which we call
a \emph{combinatorial product problem (CPP)}, and where it is of type \emph{bottleneck},
i.e., the maximum of all costs is minimized, and which we call
a \emph{combinatorial bottleneck problem (CBP)}.

Define
\[\maxdec{e}:=\left\{\begin{array}{ll}
\infty, & \mbox{if $ \costname $ is of type sum or bottleneck}
\\
c(e), & \mbox{if $ \costname $ is of type product}
\end{array}\right.\]
as the supremum by which element $e$ can be
decreased such that $\costname$ remains of type
sum, product, or bottleneck, respectively.
For a CBP, let $c^\ast$ denote
the objective value of an optimal
solution, so $c^\ast = \costs{\instance}$.

Let a CMP $\instance=(\E,D,c,\costname)$
and $ E = \{e_1,e_2,\dots,e_k\} \subseteq \E $ be given.
If for all $ l \in \N $ with
$ 1 \le l \le k $ we add $ \alpha_l \in \R $ to the cost of $e_l$,
we denote a new CMP instance by
$\newinstance{\vec{\alpha}}{E}=(\E,D,c_{\vec{\alpha},E},\newcostname{\vec{\alpha},E})$,
with costs
$ c_{\vec{\alpha},e}\bar{e}) = \left\{
\begin{array}{cl}
c(\bar{e}), & \mbox{if}\ \bar{e} \in \E \setminus E
\\
c(\bar{e})+\alpha_l, & \mbox{if}\ \bar{e}=e_l\ \mbox{for}\ l=1,2,\dots,k
\end{array}\right. $.

Note that
$\newcostname{\vec{\alpha},E}$ is of the same type as
$\costname$,
unless $\costname$ is of type product and $ \alpha_l \le
-c(e_l) $ for at least one $ l \in \{1,2,\dots,k\} $.
When $|E|=1$ with $E=\{ e\}$, we write
$\newinstance{\alpha}{e}=(\E,D,c_{\alpha,e},\newcostname{\alpha,e})$,
for the sake of simplicity, instead of the formally correct notation with $\{ e\}$.

For $M\subseteq D$, we denote the costs of the
best solution included in $M$ by $\costs{M}$.
Furthermore, the cost $\costs{\emptyset}$ is defined as infinite, i.\,e., $\infty$.

Let $e \in \E$. We define the subsets of solutions
$ \Dplus{e}=\{\,S\in D \mid  e\in S \} $ and $ \Dminus{e}=\{\,S\in D \mid e \in \E \setminus S \} $.

 %, \ref{ssec_silow}, \ref{ssec_setupp},
In Sections~\ref{ssec_siupp}--\ref{ssec_setlow}, the current
definitions of upper/lower single tolerances~\cite{GJM06b},
of regular upper/lower set tolerances~\cite{JT18},
of reverse upper/lower set tolerances~\cite{JT24} are given. All these definitions depend on a given optimal solution $ S^\ast $, where
$ e \in S^\ast $ for single upper tolerances,
$ e \in \E \setminus S^\ast $ for single lower tolerances,
$ E \subseteq S^\ast $ for set upper tolerances,
$ E \subseteq \E \setminus S^\ast $ for set lower tolerances.
However, they are shown to be independent of $ S^\ast $ in these references.
The following observation is the basis for a convenient reformulation
of these definitions.

\begin{observation}
\label{obs_settol}
Let $ \instance =(\E,D,c,\costname)$
be a CMP instance and $ E =\{e_1,e_2,\dots,e_k\} \subseteq \E $.
Furthermore, let $\vec{\alpha} = (\alpha_1,\alpha_2,\dots,\alpha_k) $,
where $ \alpha_l \ge 0 $ for all $ l=1,2,\dots,k$.
\begin{description}
\item[{\bf (a)}]
Let $S_1^\ast$ and $S_2^\ast$ be optimal solutions of $\instance$ with
$E \subseteq S_1^\ast$ and $E \nsubseteq S_2^\ast$.
Then it holds that
$\newcostname{\vec{\alpha},E}(S_2^\ast) \leq \newcostname{\vec{\alpha},E}(S_1^\ast) $.
\item[{\bf (b)}]
Let $S_1^\ast$ and $S_2^\ast$ be optimal solutions of $\instance$ with
$E \subseteq \E \setminus S_1^\ast$ and $E \nsubseteq \E \setminus S_2^\ast$.
Then it holds that
$\newcostname{-\vec{\alpha},E}(S_2^\ast) \leq \newcostname{-\vec{\alpha},E}(S_1^\ast) $.
\end{description}
\end{observation}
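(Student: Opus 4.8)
The plan is to verify the claimed inequality separately for the three admissible objective types (sum, product, and bottleneck), using in every case that $S_1^\ast$ and $S_2^\ast$ are both optimal for $\instance$, so that $\costs{S_1^\ast}=\costs{S_2^\ast}=\costs{\instance}$, together with the structural asymmetry between the two solutions. For part {\bf (a)} this asymmetry is that $E\subseteq S_1^\ast$ while $E\cap S_2^\ast$ is a proper subset of $E$; for part {\bf (b)} it is that $E\cap S_1^\ast=\emptyset$ while $E\cap S_2^\ast\neq\emptyset$. Throughout I would exploit that $\alpha_l\ge 0$, so that adding $\vec{\alpha}$ can only raise element costs and adding $-\vec{\alpha}$ can only lower them.

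For part {\bf (a)} with a sum or product objective I would write the modified cost of a solution as its original cost, altered by one summand (resp.\ one factor) per element $e_l\in E$ that the solution contains. Since $E\subseteq S_1^\ast$, all $k$ elements contribute, namely
\[
\newcostname{\vec{\alpha},E}(S_1^\ast)=\costs{S_1^\ast}+\sum_{l=1}^{k}\alpha_l
\quad\text{(sum)},\qquad
\newcostname{\vec{\alpha},E}(S_1^\ast)=\costs{S_1^\ast}\cdot\prod_{l=1}^{k}\frac{c(e_l)+\alpha_l}{c(e_l)}
\quad\text{(product)},
\]
whereas for $S_2^\ast$ only the indices $l$ with $e_l\in S_2^\ast$ contribute. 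As each increment $\alpha_l\ge 0$ and each factor $\tfrac{c(e_l)+\alpha_l}{c(e_l)}\ge 1$ (costs being positive for a product), omitting the indices with $e_l\notin S_2^\ast$ can only lower the value; combined with $\costs{S_1^\ast}=\costs{S_2^\ast}$ this gives $\newcostname{\vec{\alpha},E}(S_2^\ast)\le\newcostname{\vec{\alpha},E}(S_1^\ast)$.

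The bottleneck case of part {\bf (a)} is the step I expect to be the main obstacle, since here the modification is neither additive nor multiplicative across the solution. I would argue as follows. Because element costs only increase, $\newcostname{\vec{\alpha},E}(S_1^\ast)\ge\costs{S_1^\ast}=c^\ast$; moreover, since $E\subseteq S_1^\ast$, each modified cost $c(e_l)+\alpha_l$ is attained inside $S_1^\ast$, so $\newcostname{\vec{\alpha},E}(S_1^\ast)\ge c(e_l)+\alpha_l$ for every $l$. Now take any $e\in S_2^\ast$: if $e\notin E$ its cost is unchanged and at most $c^\ast$, while if $e=e_l\in E$ its modified cost is exactly $c(e_l)+\alpha_l$. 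In either case the modified cost of $e$ is bounded by $\newcostname{\vec{\alpha},E}(S_1^\ast)$, and taking the maximum over $e\in S_2^\ast$ yields $\newcostname{\vec{\alpha},E}(S_2^\ast)\le\newcostname{\vec{\alpha},E}(S_1^\ast)$.

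Finally, part {\bf (b)} follows from a symmetric but simpler argument. Now we add $-\vec{\alpha}$, so all element costs can only decrease, and $E\cap S_1^\ast=\emptyset$ leaves $S_1^\ast$ untouched, giving $\newcostname{-\vec{\alpha},E}(S_1^\ast)=\costs{S_1^\ast}=\costs{\instance}$. For $S_2^\ast$, lowering the costs of the elements of $E\cap S_2^\ast$ can only decrease its value: in the sum case one subtracts the nonnegative quantity $\sum_{l:\,e_l\in S_2^\ast}\alpha_l$; in the product case one multiplies by factors $\tfrac{c(e_l)-\alpha_l}{c(e_l)}\in[0,1]$ (which stay nonnegative as long as the change keeps $\newcostname{-\vec{\alpha},E}$ of product type, i.e.\ $\alpha_l\le\maxdec{e_l}=c(e_l)$); and in the bottleneck case one takes a maximum over costs that are all at most their original values, hence at most $c^\ast$. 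In every case $\newcostname{-\vec{\alpha},E}(S_2^\ast)\le\costs{\instance}=\newcostname{-\vec{\alpha},E}(S_1^\ast)$, which completes the plan.
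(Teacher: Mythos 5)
Your proof is correct. Note that the paper itself gives no proof of this observation at all --- it is stated as an unproven ``Observation'' serving as the basis for reformulating the tolerance definitions --- so there is no argument of the authors' to compare against; your write-up fills a gap the paper leaves implicit. Your case analysis over the three objective types is the natural route, and each case checks out: for sum and product you correctly use that restricting a sum of nonnegative increments (resp.\ a product of factors $\ge 1$, using positivity of costs) to the indices with $e_l\in S_2^\ast$ can only decrease the value, combined with $\costs{S_1^\ast}=\costs{S_2^\ast}$. The bottleneck case of (a), which you rightly flag as the delicate one, is handled cleanly: bounding every modified element cost in $S_2^\ast$ either by $c^\ast\le\newcostname{\vec{\alpha},E}(S_1^\ast)$ (for $e\notin E$) or by $c(e_l)+\alpha_l\le\newcostname{\vec{\alpha},E}(S_1^\ast)$ (for $e_l\in E\subseteq S_1^\ast$) and then taking the maximum is exactly what is needed. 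Your parenthetical caveat in the product case of (b) --- that the factors $\tfrac{c(e_l)-\alpha_l}{c(e_l)}$ remain in $[0,1]$ only when $\alpha_l\le\maxdec{e_l}$ --- is a legitimate observation about the observation's hypotheses; the paper's convention that $\newcostname{-\vec{\alpha},E}$ must remain of product type makes this restriction implicit, and the inequality holds on that domain as you argue.
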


\subsection{Single upper tolerances}
\label{ssec_siupp}

Let $ \instance =(\E,D,c,\costname)$
be a CMP instance and let $ \ute := \bigcup_{S^\ast\in D^\ast} S^\ast$.
We define the \emph{(single) upper tolerance} $\ut{\instance}{e}$
of each $e\in \ute$ as follows:
\begin{eqnarray}
\label{newdef_sin_upp}
\nonumber
& & \ut{\instance}{e}
\\
\label{ut_sin_def}
& := &
\sup \; \{
\alpha\in\R_0^+
\mid
\mbox{each opt.\ $ S^\ast $ for}\ \instance\
\mbox{is opt.\ for}\
\newinstance{\alpha}{e} \}
\\
\nonumber
& := & \inf \; \{\alpha\in\R_0^+\mid
\mbox{some opt.\ $ S^\ast $ for}\ \instance\
\mbox{is not opt.\ for}\
\newinstance{-\alpha}{e} \}.
\end{eqnarray}
In the case of $ \ut{\instance}{e} < \infty $, ``sup'' can be replaced by ``max''.

This definition is a slight modification from the one provided in~\cite{GJM06b}.
As optimal solutions $ S^\ast$ with $ e \in \E \setminus S^\ast$ remain automatically
optimal when increasing $ c(e)$, both definitions are equivalent.

We can determine the set $\ute$ from all single lower tolerances if $\costname$
is of type sum or product~\cite[Theorem 12]{GJM06b} and from all so-called
\emph{smallest includes} if $\costname$ is of type bottleneck ~\cite[Algorithm
1]{TJ22}.

\subsection{Single lower tolerances}
\label{ssec_silow}

Let $ \instance =(\E,D,c,\costname)$
be a CMP instance and let $ \lte := \E \setminus \bigcap_{S^\ast\in D^\ast} S^\ast$.
We define the \emph{(single) lower tolerance}
of each $e\in \lte$ as follows:
\begin{eqnarray*}
& & \lt{\instance}{e}
\\
& := & \sup \; \{\alpha\in\R_0^+\mid
\mbox{each opt.\ $ S^\ast $ for}\ \instance\
\mbox{is opt.\ for}\
\newinstance{-\alpha}{e} \}\\
& := & \inf \; \{\alpha\in\R_0^+\mid
\mbox{some opt.\ $ S^\ast $ for}\ \instance\
\mbox{is not opt.\ for}\
\newinstance{-\alpha}{e} \}.
\end{eqnarray*}

In the case of $ \lt{\instance}{e} < \infty $, ``sup'' can be replaced by ``max''.

This definition is a slight modification from the one provided in~\cite{GJM06b}.
By Observation~\ref{obs_settol}(b), applied to sets of cardinality one, it holds
for a decrease of $ c(e)$ by $\alpha \ge 0 $ that
if any optimal solution $ S^\ast$ with $ e \in \E \setminus S^\ast$ remains
optimal, then also any optimal solution $ S^\ast$ with $ e \in S^\ast$ remains
optimal.
Thus, both definitions are equivalent.

We can determine the set $\lte$ from all single upper tolerances if $\costname$
is of type sum or product~\cite[Theorem 6]{GJM06b}, but we need both all upper
tolerances and smallest includes if $\costname$ is of type bottleneck
~\cite[Algorithm 1]{TJ22}.

\subsection{Regular and reverse set upper tolerances}
\label{ssec_setupp}

Let $ \instance =(\E,D,c,\costname)$
be a CMP instance and let
$ \uts := \{ E \subseteq \E \mid \exists \, S^\ast \in D^\ast:
\, E \subseteq S^\ast \} $.
We define the {\em regular set upper tolerance}
for each
$ E = \{e_1,e_2,\dots,e_k \} \in \uts $ as follows:
\begin{eqnarray}
\nonumber
& & \ut{\instance}{E}
\\
\label{newdef_reg_upp}
& := &
\sup \; \Big\{
\alpha\in\R
\mathrel{\Big|}
\mbox{for each optimal $ S^\ast $ for}\ \instance\
\\
\nonumber
&  & \exists \, \vec{\alpha} = (\alpha_1,\alpha_2,\dots,\alpha_k),
\alpha_1, \alpha_2,\dots, \alpha_k \ge 0,\
\\
& &
\nonumber
\alpha = \sum_{l=1}^k \alpha_l, \;
S^\ast\ \mbox{is optimal for}\
\newinstance{\vec{\alpha}}{E} \Big\}.
\end{eqnarray}

In the case of $ \ut{\instance}{E} < \infty $, ``sup'' can be replaced by ``max'',

This definition is a slight modification from the one provided in~\cite{JT18}.
By Observation~\ref{obs_settol}(a) it holds
for an increase of $ c(e_s)$ by $\alpha_l \ge 0 $ for $l=1,2,\dots,k$
that if an optimal solution $ S^\ast$ with $ E \subseteq S^\ast$ remains optimal,
then also optimal solutions $ S^\ast$ with $ E \nsubseteq S^\ast$ remain optimal.
Thus, both definitions are equivalent.

We define the {\em reverse set upper tolerance}
for each
$ E = \{e_1,e_2,\dots,e_k \} \in \uts $ as follows:
\begin{eqnarray}
\nonumber
& & \utb{\instance}{E}
\\
& := &
\label{ut_revset_def}
\inf \; \Big\{
\alpha\in\R
\mathrel{\Big|}
\exists \, \mbox{optimal $ S^\ast $ for}\ \instance\
\\
& &
\nonumber
\exists \,  \vec{\alpha} = (\alpha_1,\alpha_2,\dots,\alpha_k),
\; \alpha_1, \alpha_2,\dots, \alpha_k \ge 0,\
\\
& &
\nonumber
\alpha = \sum_{l=1}^k \alpha_l, \;
S^\ast\ \mbox{is not optimal for}\
\newinstance{\vec{\alpha}}{E} \Big\}.
\end{eqnarray}

This definition is a slight modification from the one provided in~\cite{JT24}.
By Observation~\ref{obs_settol}(a), it holds
for an increase of $ c(e_s)$ by $\alpha_l \ge 0 $ for $l=1,2,\dots,k$ that
if there is an optimal solution $ S^\ast$ which loses optimality,
then also an optimal solution $ S^\ast $ with $ E \subseteq S^\ast$ loses optimality.
Thus, both definitions are equivalent.

By definition, it holds that
$ \utb{\instance}{E} \le \ut{\instance}{E} $
and $ \ut{\instance}{\{e\}} = \utb{\instance}{\{e\}} \; = \;
\ut{\instance}{e}$.

\subsection{Regular and reverse set lower tolerances}
\label{ssec_setlow}

Let $ \instance =(\E,D,c,\costname)$
be a CMP instance and let
$ \lts := \{ E \subseteq \E \; \Big| \; \exists \, S^\ast
\in D^\ast: \, E \subseteq \E \setminus S^\ast \} $.
We define the {\em regular set lower tolerance}
for each $ E = \{e_1,e_2,\dots,e_k \}
\in \lts $ as follows:
\begin{eqnarray*}
\label{newdef_reg_low}
\lt{\instance}{E} & := &
\sup \; \Big\{ \alpha\in\R
\mathrel{\Big|}
\mbox{for each optimal $ S^\ast $ for} \ \instance
\\
& &
\exists \, \vec{\alpha} = (\alpha_1,\alpha_2,\dots,\alpha_k),
0 \le \alpha_1 < \maxdec{e_1},
\\
& &
0 \le \alpha_2 <  \maxdec{e_2}, \dots,
0 \le \alpha_k < \maxdec{e_k},\
\\
& &
\alpha = \sum_{l=1}^k \alpha_l, \;
S^\ast\ \mbox{is optimal for}\
\newinstance{-\vec{\alpha}}{E} \Big\}.
\end{eqnarray*}
In the case of $ \lt{\instance}{E} < \infty $, ``sup'' can be replaced by ``max''.

This definition is a slight modification from the one provided in~\cite{JT18}.
By Observation~\ref{obs_settol}(b), it holds
for a decrease of $ c(e_s)$ by $\alpha_l \ge 0 $ for $l=1,2,\dots,k$ that if
an optimal solution $ S^\ast$ with $ E \subseteq \E \setminus S^\ast$ remains optimal,
then also optimal solutions $ S^\ast$ with $ E \nsubseteq \E \setminus S^\ast$ remain optimal.
Thus, both definitions are equivalent.

We define the {\em reverse set lower tolerance}
for each
$ E = \{e_1,e_2,\dots,e_k \} \in \lts $ as follows:
\begin{eqnarray*}
\label{newdef_rev_low}
\ltb{\instance}{E} & := &
\inf \; \Big\{ \alpha\in\R
\mathrel{\Big|}
\exists \, \mbox{optimal $ S^\ast $ for}\ \instance\
\\
& &
\exists \, \vec{\alpha} = (\alpha_1,\alpha_2,\dots,\alpha_k),
\; 0 \le \alpha_1 < \maxdec{e_1},
\\
& &
0 \le \alpha_2 <  \maxdec{e_2}, \dots,
0 \le \alpha_k
< \maxdec{e_k},\
\\
& &
\alpha = \sum_{l=1}^k \alpha_l, \;
S^\ast\ \mbox{is not optimal for}\
\newinstance{-\vec{\alpha}}{E} \Big\}.
\end{eqnarray*}

This definition is a slight modification from the one provided in~\cite{JT24}.
By Observation~\ref{obs_settol}(b), it holds
for a decrease of $ c(e_l)$ by $\alpha_l \ge 0 $ for $l=1,2,\dots,k$ that
if there is an optimal solution $ S^\ast$ which loses optimality,
then also an optimal solution $S^\ast$ with $ E \subseteq \E \setminus S^\ast$
loses optimality.
Thus, both definitions are equivalent.

By definition, it holds that:
$ \ltb{\instance}{E} \; \le \; \lt{\instance}{E} $
and $ \lt{\instance}{\{e\}} \; = \; \ltb{\instance}{\{e\}} \; = \; \lt{\instance}{e} $.

\subsection{Interpretation of regular and reverse set tolerances}
\label{sec_interpret}

Consider a set of elements $E$ for which all element costs can either increase (set upper tolerances)
or decrease (set lower tolerances).
There are certain combinations of cost changes for which all optimal solutions $S^\ast$ remain optimal,
the optimality region (illustrated graphically for $|E|=2$ in ~\cite{JT24}). Note that the region in which
all solutions remain optimal is the intersection of the optimality regions of all optimal solutions.
The regular set tolerance is the $1$-norm distance to any point in this region farthest away from the origin,
whereas the reverse set lower tolerance is the (limit of the) $1$-norm distance to any closest point
outside the region.

\section{Examples}

The following examples illustrate the different tolerance definitions as well as their
benefits and shortcomings.
Note that not all single and set tolerances are currently defined in these examples.

\begin{example}
\label{examp0}
Consider the following CSP.
\begin{description}
\item[$\bullet$] $\E=\{w,x,y,z\}$ with $c(w)=3$, $c(x)=5$, $c(y)=4$, $c(z)=6$,
\item[$\bullet$] $D=\{\,\{w,x\},\, \{y,z\}\}$,
\item[$\bullet$]
$ S_1 := \{w,x\} $ is optimal with cost $8$, whereas
$S_2 := \{y,z\} $ is non-optimal with cost $10$.
\end{description}
\end{example}

\begin{example}
\label{examp1}
Consider the following CSP.
\begin{description}
\item[$\bullet$] $\E=\{v,w,x,y,z\}$ with $c(v)=2$, $c(w)=3$, $c(x)=5$, $c(y)=4$, $c(z)=8$,
\item[$\bullet$] $D=\{\,\{v,x\},\, \{w,y\},\,\{z\}\}$,
\item[$\bullet$]
$ S_1 := \{v,x\} $ and $S_2:=\{w,y\}$ are optimal with cost $7$, whereas
$S_3 := \{z\} $ is non-optimal with cost $8$.
\end{description}

\pagebreak[3]

\end{example}
\begin{example}
\label{examp2}
Consider the following CPP.
\begin{description}
\item[$\bullet$] $\E=\{v,w,x,y,z\}$ with $c(v)=2$, $c(w)=3$, $c(x)=6$, $c(y)=4$,
$c(z)=24$,
\item[$\bullet$] $D=\{\,\{v,x\},\, \{w,y\},\,\{z\}\}$,
\item[$\bullet$]
$ S_1 := \{v,x\} $ and $S_2:=\{w,y\}$ are optimal with cost $12$, whereas
$S_3 := \{z\} $ is non-optimal with cost $24$.
\end{description}
\end{example}

\pagebreak[3]

\begin{example}
\label{examp3}
Consider the following CBP.
\hspace*{0em}
\begin{description}
\item[$\bullet$] $\E=\{w,x,y,z\}$ with $c(w)=2$, $c(x)=3$, $c(y)=7$, $c(z)=8$.
\item[$\bullet$] $D=\{\,\{w,y\},\, \{x,y\},\,\{z\}\}$,
\item[$\bullet$]
$ S_1 := \{w,y\} $ and $S_2:=\{x,y\}$ are optimal with cost $7$, whereas
$S_3 := \{z\} $ is non-optimal with cost $8$.
\end{description}
\end{example}

\section{Generalized definitions of upper and lower tolerances}
\label{sec_newdef}

\subsection{New and extended definitions of tolerances}
\label{ssec_newext}

In this section, we wish to provide definitions of single and set tolerances to all
elements of $\E$ and to all subsets of $\E$, respectively.
These definitions should be \emph{consistent} with the current definition for the
domains on which single and set tolerances are defined. Moreover,
tolerances computations should be as \emph{convenient} as possible.

\subsubsection*{Upper tolerances}

For upper tolerances, we choose to extend the current definition
of single upper tolerances~\eqref{ut_sin_def} to any $e\in\E$
and the current definitions of
regular set upper tolerances~\eqref{newdef_reg_upp}
and reverse set upper tolerances~\eqref{ut_revset_def}
to any $E\subseteq \E$, which
we call \emph{extended definitions}.
We use the prime to distinguish the extended definitions
and use the notation $\nut{\instance}{e}$ for single upper tolerances
applied to the element $e$
and the notation $\nut{\instance}{E}$ / $\nutb{\instance}{E}$
for regular / reverse set upper tolerances
applied to the subset $E$.

\subsubsection*{Lower tolerances}

For lower tolerances, we introduce \emph{new definitions},
which for regular / reverse lower tolerances are as follows:
\begin{eqnarray}
\label{newdef_reg_lowb}
\nlt{\instance}{E} & := &
\sup \; \Big\{ \alpha\in\R
\mathrel{\Big|}
\exists \, \vec{\alpha} = (\alpha_1,\alpha_2,\dots,\alpha_k),
\\
\nonumber
& &
0 \le \alpha_1 < \maxdec{e_1},
0 \le \alpha_2 <  \maxdec{e_2}, \dots,
0 \le \alpha_k
\\
& &
\nonumber
< \maxdec{e_k},\
\alpha = \sum_{l=1}^k \alpha_l, \;
\newcosts{-\vec{\alpha},E}{\newinstance{-\vec{\alpha}}{E}} = \costs{\instance} \}.
\end{eqnarray}

In the case of $ \nlt{\instance}{E} < \infty $, ``sup'' can be replaced by ``max''.
\begin{eqnarray}
\nonumber
& & \nltb{\instance}{E}
\\
\label{newdef_rev_lowb}
& := &
\inf \; \Big\{ \alpha\in\R
\mathrel{\Big|}
\exists \, \vec{\alpha} = (\alpha_1,\alpha_2,\dots,\alpha_k),
\\
& &
\nonumber
\; 0 \le \alpha_1 < \maxdec{e_1},
0 \le \alpha_2 <  \maxdec{e_2}, \dots,
0 \le \alpha_k
\\
& &
\nonumber
< \maxdec{e_k},\
\alpha = \sum_{l=1}^k \alpha_l, \;
\newcosts{-\vec{\alpha},E}{\newinstance{-\vec{\alpha}}{E}} < \costs{\instance} \}.
\end{eqnarray}

The equivalent definitions of single lower tolerances are then:
\begin{eqnarray}
\label{newdef_sin_low}
\nlt{\instance}{e}
& := & \sup \; \{\alpha\in\R\mid
\newcosts{-\alpha,e}{\newinstance{-\alpha}{e}}
= \costs{\instance} \}.
\\
\nonumber
& := & \inf \; \{\alpha\in\R\mid
\newcosts{-\alpha,e}{\newinstance{-\alpha}{e}}
< \costs{\instance} \}.
\end{eqnarray}

\subsection{Concepts and motivation}
\label{ssec_conc}

Now, we explain why our definitions of upper tolerances are simply extensions
to the current definitions, whereas our definitions of lower tolerances are new concepts. Moreover,
we explain why one should not adopt new definitions for upper tolerances or use
extended definitions for lower tolerances.
Our new definitions of upper tolerances are extensions of the current definitions, whereas
for the new lower tolerance definitions we consider the change in the objective value.

\subsubsection*{Upper tolerances}

We choose the extended definitions for upper tolerances,
because they are consistent with the current definitions,
while the new definitions are not.
For example, the upper tolerances of $v$, $w$, $x$, $y$ in Example~\ref{examp1} are $0$ under the current and the
extended definition, whereas the upper tolerances of \emph{all} elements $v$, $w$, $x$, $y$, $z$
in Example~\ref{examp1} would be infinite
 under the new definition, as the objective value remains the same if the costs of each of these elements
increased by any positive amount.
This also illustrates that the interpretation of the resulting values can become problematic.

\subsubsection*{Lower tolerances}

We show that the lower tolerances under the new definitions are consistent with those
according to the current definitions for single lower tolerances of any $e\in\lte$ and for set tolerances
of any $E\subseteq \lts$.
Moreover, we show that computation formulas and bounds can be extended to any element in or any subset of $\E$ without the need for
$\lte$ or $\lts$.

A further benefit of the new definition of single lower tolerances is that it is consistent with the
computation of lower tolerances of all elements both for the $1$-Tree Problem in~\cite{Hel00}
(called \emph{nearness values}) and for the Linear Assignment Problem~\cite{kindervater85}
(called \emph{hard dual values}). %All values can be computed fast, also because

In contrast, applying the extended definitions of lower tolerances reduces the computational
convenience, in particular for CBPs. For some $e\in \E \setminus \lte $,
the single lower tolerance of $e$
is equal to the infimum decrease in $c(e)$ over all $S$
such that some $S^\ast \in D^\ast $ becomes non-optimal. In Example~\ref{examp3}, we first need
to determine $\E \setminus \lte$ to find that it consists of $y$. This element is included in the optimal solutions
$S_1$ and $S_2$, so we should determine the infimum decrease in $c(y)$ for which either $S_1$ and $S_2$
becomes non-optimal, which in this case would be $4$ (a similar reasoning applies to the reverse set
lower tolerance).
So the extended definitions are less computationally convenient than the new ones.

\section{Computation, bounds, and properties of the extended upper tolerance definitions}
\label{sec_uppset}

In this section, we determine whether the computation of single and set upper tolerances
for $\ute$ and $\uts$, respectively, from previous studies
still applies to the extended definitions in this
work.

The following easily provable remark forms the basis for our results.

\begin{remark}
\label{rem_newdef}
Let $ \instance =(\E,D,c,\costname)$ be a CMP instance
and $ E \subseteq \E $.
Then the following holds:
\begin{description}
\item[{\bf (a)}] $ \nut{\instance}{\bar{E}} \le \nut{\instance}{E} $ for
each $ \bar{E} \subseteq E $.
\item[{\bf (b)}] $ \nutb{\instance}{E} \le \nutb{\instance}{\bar{E}} $ for
each $ \bar{E} \subseteq E $.
\end{description}
\end{remark}

Parts (b), (c), (d) of the following remark show
that the upper tolerance computation formulas in \cite[Theorem 4]{GJM06b}
no longer apply for the extended definition of the
single upper tolerance for arbitrary $ e \in \E $.

\begin{remark}
\label{usi_comp2}
Let $ e \in \E \setminus \ute $.
\begin{description}
\item[\bf (a)]
$ \nut{\instance}{e} = \infty $ holds.
\item[\bf (b)]
Let $\costname$ be of type sum.
Then $\nut{\instance}{e}=\costs{\Dminus{e}}-\costs{\instance}$ does not hold in general.
\item[\bf (c)]
Let $\costname$ be of type product.
Then $\nut{\instance}{e}= \frac{\costs{\Dminus{e}}-\costs{\instance}}{\costs{\instance}}\cdot c(e)$
does not hold in general.
\item[\bf (d)]
Let $\costname$ be of type bottleneck.
Then $\nut{\instance}{e}=\costs{\Dminus{e}}-c(e)$ does not hold in general.
\end{description}
\end{remark}

\begin{proof}
Because of $ e \in \E \setminus \ute $, $ e $ lies outside each optimal solution.
\begin{description}
\item[\bf (a)]
Each optimal solution remains optimal if $ c(e) $
is increased by an arbitrary $ \alpha > 0 $.
It follows that
$ \nut{\instance}{e} = \infty $.
\item[\bf (b),(c),(d)]
By (a), the formulas of (b), (c), (d)
only hold for $ \E \setminus \ute $ if the right-hand terms
are $ \infty$. However,
each of these terms can only be $ \infty$ if the term $ \costs{\Dminus{e}}$
equals $ \infty$, and this is only the case
if each feasible solution contains~$e$.
In other words, the formulas of (b), (c), (d) do not hold in general.
\hfill $\qed$
\end{description}
\end{proof}

The following theorem shows that the bounds and exact values of regular
set upper tolerances~\cite[Theorem 9]{JT18}
can only partly be generalized to arbitrary $ E \subseteq \E $
for the extended definition of the regular set upper tolerance.

\begin{theorem}
\label{urg_comp2}
Let $ \instance =(\E,D,c,\costname)$ be a CMP instance
and $ E = \{e_1,e_2,\dots,e_k\} \subseteq \E $.
Then the following holds:
\begin{description}
\item[{\bf (a)}]
$ \max_{l=1}^k \, \{ \nut{\instance}{e_l} \} \; \le \; \nut{\instance}{E} $.
\item[{\bf (b)}]
If $\costname$ is of type sum or product,
$ \nut{\instance}{E} \; \le \;
\sum_{l=1}^k \nut{\instance}{e_l} $ does \emph{not}
hold in general.
\item[{\bf (c)}]
If $\costname$ is of type bottleneck,
$ \sum_{l=1}^k \nut{\instance}{e_l}
\; \le \;
\nut{\instance}{E} $.
\end{description}
\end{theorem}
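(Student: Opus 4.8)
The plan is to handle the three parts in order, with part (c) carrying essentially all of the work. For part (a), I would argue directly from the definition of the extended regular set upper tolerance. Let $m$ be an index attaining $\max_{l=1}^k \nut{\instance}{e_l}$. For every $\alpha$ with $0 \le \alpha < \nut{\instance}{e_m}$, the decomposition $\vec{\alpha}$ given by $\alpha_m = \alpha$ and $\alpha_l = 0$ for $l \neq m$ satisfies $\sum_l \alpha_l = \alpha$, and $\newinstance{\vec{\alpha}}{E}$ is identical to $\newinstance{\alpha}{e_m}$; by definition of $\nut{\instance}{e_m}$ every optimal $S^\ast$ stays optimal there. Since one and the same decomposition works for all $S^\ast$, the value $\alpha$ is admissible for $\nut{\instance}{E}$, so $\alpha \le \nut{\instance}{E}$, and letting $\alpha \to \nut{\instance}{e_m}$ gives the bound. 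This is also immediate from Remark~\ref{rem_newdef}(a) applied to $\bar{E} = \{e_m\}$ together with $\nut{\instance}{\{e_m\}} = \nut{\instance}{e_m}$.

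For part (b) I would give a counterexample in each case. In the sum case I take Example~\ref{examp1} with $E = \{v,w\}$. Because $v$ and $w$ belong to the two different optimal solutions $S_1$ and $S_2$, one checks $\nut{\instance}{v} = \nut{\instance}{w} = 0$, so the right-hand side is $0$. However, no optimal solution contains both $v$ and $w$, and for every $\alpha \ge 0$ one keeps $S_1$ optimal by placing the entire increase on $w$ and keeps $S_2$ optimal by placing it on $v$; hence $\nut{\instance}{E} = \infty$, and $\nut{\instance}{E} \le \sum_l \nut{\instance}{e_l}$ fails. The product case is handled identically by Example~\ref{examp2} with $E = \{v,w\}$. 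This is precisely the situation created by the extension: each single tolerance is finite only because its element lies in some optimal solution, yet $E \notin \uts$.

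For part (c) I would first dispose of the degenerate case. If some $e_l \in \E \setminus \ute$, then $\nut{\instance}{e_l} = \infty$ by Remark~\ref{usi_comp2}(a), so the left-hand side is $\infty$; placing the whole increase on that $e_l$ leaves every optimal solution optimal for every $\alpha \ge 0$ (its cost is in no optimal solution), so $\nut{\instance}{E} = \infty$ too and the inequality holds. Hence I may assume $e_l \in \ute$ for all $l$, where $\nut{\instance}{e_l} = \costs{\Dminus{e_l}} - c(e_l)$ (the bottleneck formula valid on $\ute$; cf.\ Remark~\ref{usi_comp2}(d)). Set $\alpha_l := \nut{\instance}{e_l}$ and $\alpha := \sum_{l=1}^k \alpha_l$, so that in $\newinstance{\vec{\alpha}}{E}$ the new cost of each $e_l$ is exactly $c(e_l) + \alpha_l = \costs{\Dminus{e_l}}$. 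The goal is to show that this one decomposition keeps every optimal $S^\ast$ optimal, which yields $\nut{\instance}{E} \ge \alpha = \sum_{l=1}^k \nut{\instance}{e_l}$.

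The crux, and the step I expect to be the main obstacle, is this preservation claim. Fix an optimal $S^\ast$ of cost $c^\ast$ and let $M := \newcosts{\vec{\alpha},E}{S^\ast}$; since costs only rise, $M \ge c^\ast$. If $M = c^\ast$, then every $S$ satisfies $\newcosts{\vec{\alpha},E}{S} \ge \costs{S} \ge c^\ast = M$, so $S^\ast$ is still optimal. If $M > c^\ast$, the new bottleneck of $S^\ast$ cannot be attained at an element outside $E$ (those keep cost $\le c^\ast$), so it is attained at some $e_{l_0} \in S^\ast \cap E$ with $\costs{\Dminus{e_{l_0}}} = M$. Then any $S$ containing $e_{l_0}$ has new cost $\ge \costs{\Dminus{e_{l_0}}} = M$ since the bottleneck cost is a maximum, while any $S$ avoiding $e_{l_0}$ lies in $\Dminus{e_{l_0}}$ and already has $\costs{S} \ge \costs{\Dminus{e_{l_0}}} = M$; either way $\newcosts{\vec{\alpha},E}{S} \ge M$, so $S^\ast$ remains optimal. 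The delicate point is exactly this dichotomy: the bottleneck structure ensures that raising each $e_l$ to the level $\costs{\Dminus{e_l}}$ never lets a competitor undercut $S^\ast$, which is what fails for sum and product and explains why the inequality reverses direction between (b) and (c).
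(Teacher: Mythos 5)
Your proof is correct and follows essentially the same route as the paper: part (a) from Remark~\ref{rem_newdef}(a), part (b) via Examples~\ref{examp1} and~\ref{examp2} with $E=\{v,w\}$, and part (c) by taking the decomposition $\alpha_l=\nut{\instance}{e_l}$ and verifying that it keeps every optimal solution optimal. Two small remarks. First, in (b) you assert $\nut{\instance}{E}=\infty$ on the grounds that the decomposition $\vec{\alpha}$ may be chosen separately for each optimal $S^\ast$; the paper instead computes $\nut{\instance}{E}=2$ for the same instance, which corresponds to requiring a single decomposition that works for all optimal solutions simultaneously. The quantifier order in definition~\eqref{newdef_reg_upp} as written supports your reading, and the counterexample is valid under either reading since the right-hand side is $0$, but you should be aware that the two readings disagree on the actual value here. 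Second, in (c) your degenerate case only covers $e_l\in\E\setminus\ute$; an element $e_l\in\ute$ can also have $\nut{\instance}{e_l}=\infty$ (namely when $\costs{\Dminus{e_l}}=\infty$ because every feasible solution contains $e_l$), and then $\alpha_l$ is not a real number. This is repaired exactly as the paper does: if $\sum_{l=1}^k\nut{\instance}{e_l}=\infty$ then $\max_{l=1}^k\nut{\instance}{e_l}=\infty$, so $\nut{\instance}{E}=\infty$ by part (a). With that case disposed of, your identification of the new bottleneck element $e_{l_0}$ and the dichotomy over solutions in $\Dplus{e_{l_0}}$ versus $\Dminus{e_{l_0}}$ is a clean, self-contained substitute for the paper's corresponding step, which instead compares against the single-element instance $\newinstance{\alpha_t}{e_t}$ for the index $t$ maximizing $c(e_t)+\alpha_t$ and invokes optimality at $\alpha_t=\ut{\instance}{e_t}$; the two arguments buy the same conclusion.
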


\begin{proof}
\begin{description}
\item[\bf (a)]
This follows directly from Remark~\ref{rem_newdef}(a).
\item[\bf (b)]
We distinguish between two cases.
\begin{description}
\item[(A)] $\costname$ is of type sum.

Consider Example~\ref{examp1}.
Let $E=\{v,w\}$, $ e_1 = v$, $e_2 = w $.
It holds that $ \nut{\instance}{v} =  \nut{\instance}{w}  = 0 $
and $ \nut{\instance}{E} = 2 $ as both $S_1$ and $S_2$ remain
optimal, if $ c(v) $ and $c (w) $ are increased by $1$ at the same time.
Thus, $ \nut{\instance}{\{v,w\}} = 2 > 0
= \nut{\instance}{v} + \nut{\instance}{w} $ holds.

\item[(B)] $\costname$ is of type product.

Consider Example~\ref{examp2}.
Let $E=\{v,w\}$, $ e_1 = v$, $e_2 = w $.
It holds that $ \nut{\instance}{v} =  \nut{\instance}{w}  = 0 $
and $ \nut{\instance}{E} = 5 $ as both $S_1$ and $S_2$ remain
optimal, if $ c(v) $ is increased by $2$
and $ c(w) $ is increased by $3$ at the same time.
Thus, $ \nut{\instance}{\{v,w\}} = 5 > 0 =
\nut{\instance}{v} + \nut{\instance}{w} $ holds.

\end{description}

\item[\bf (c)]
Trivially, the inequality is true if $ \nut{\instance}{E} = \infty $.
If $ \sum_{l=1}^k \nut{\instance}{e_l} = \infty $ holds, then
it also holds that $ \max_{l=1}^k \, \{ \nut{\instance}{e_l} \} = \infty $.
It follows from (a) that $ \nut{\instance}{E} = \infty $, and the inequality is
also true. In the following let both terms
$ \nut{\instance}{E} $ and $ \sum_{l=1}^k \nut{\instance}{e_l} $
be not equal $ \infty $.

Let $S^\ast$ be an arbitrary optimal solution of $\instance$.
Let $ \vec{\alpha} = (\alpha_1,\alpha_2,\dots,$ $ \alpha_k) $ and
$ \alpha_l = \ut{\instance}{e_l} < \infty $ for $ l = 1,2,\dots,k $.
We distinguish between two cases.
\begin{description}
\item[(A)] $ E \cap S^\ast = \emptyset$.

Then it holds for each feasible solution $S$:
\begin{align*}
\newcostname{\vec{\alpha},E}(S^\ast) \; = \;
\costs{S^\ast}
\; \le \;
\costs{S}
\; \le \;
\newcostname{\vec{\alpha},E}(S).
\end{align*}

\item[(B)] $ E \cap S^\ast \not= \emptyset$.

Choose $ t \in \{1,2,\dots,k\} $ such that
\begin{eqnarray}
\label{max_rel}
c(e_t) + \alpha_t & = & \max_{l=1,e_l \in S^\ast}^k \{c(e_l) + \alpha_l\}.
\end{eqnarray}
Note that $t$ exists because of
$ E \cap S^\ast \not= \emptyset$ and
$ \alpha_l = \ut{\instance}{e_l} < \infty $ for $ l = 1,2,\dots,k $.
Then it holds for each feasible solution $S$:
\begin{align*}
\newcostname{\vec{\alpha},E}(S^\ast) & \; = \;
\newcostname{\alpha_t,e_t}(S^\ast)
&& \mbox{because of Eq.~\eqref{max_rel}},
\\
& \; \le \; \newcostname{\alpha_t,e_t}(S),
&& \mbox{because of $ \alpha_t = \ut{\instance}{e_t} $},
\\
& \; \le \; \newcostname{\vec{\alpha},E}(S).
\end{align*}

\end{description}

It follows that
$ \sum_{l=1}^k \ut{\instance}{e_l} \le \ut{\instance}{E} $.
\hfill $\qed$
\end{description}
\end{proof}

\begin{remark}
The bounds in Theorem~\ref{urg_comp2}(b) hold for any $E \in \uts$ and for
any $E \subseteq \ute$ with $E \cap (\E \setminus \ute)
\neq \emptyset$.
\end{remark}

\begin{proof}
The first part directly follows from~\cite[Theorem 9(b)]{JT18}. The second
part holds, since the regular set upper tolerance is always infinite when
it contains an element from $(\E \setminus \ute)$, and then the inequality trivially holds.
\hfill \qed
\end{proof}

\smallskip

The following theorem shows that the bounds and exact values of reverse set upper tolerances
~\cite[Theorem 3,4]{JT24} can be generalized to
arbitrary $ E \subseteq \E $.

\begin{theorem}
\label{urv_comp2}
Let $ \instance =(\E,D,c,\costname)$ be a CMP instance
and $ E = \{e_1,e_2,\dots,e_k\} \subseteq \E $.
Then the following holds:
\begin{description}
\item[{\bf (a)}]
$ \nutb{\instance}{E} \le \; \min_{l=1}^k \, \{ \nut{\instance}{e_l} \} $.
\item[{\bf (b)}]
If $\costname$ is of type sum or bottleneck,
$ \nutb{\instance}{E} \; = \;
\min_{l=1}^k \{ \nut{\instance}{e_l} \} $.
\item[{\bf (c)}]
If $\costname$ is of type product,
$ \left(\sqrt[k]{ \min_{l=1}^k
\left\{ \frac{\nut{\instance}{e_l} }{c(e_l)} \right\} +1 } -1 \right)
\cdot \min_{l=1}^k \, \{ c(e_l) \}
\; \le \; \nutb{\instance}{E} $.
\end{description}
\end{theorem}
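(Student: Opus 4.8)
The plan is to prove all three parts through what I will call the \emph{preservation direction}: since part~(a) already supplies one inequality, for (b) and (c) it suffices to show that as long as the \emph{total} increase $\alpha=\sum_{l=1}^k\alpha_l$ stays below the claimed threshold, \emph{every} optimal solution $S^\ast$ of $\instance$ remains optimal for $\newinstance{\vec{\alpha}}{E}$. Part~(a) itself is immediate: applying Remark~\ref{rem_newdef}(b) with $\bar E=\{e_l\}$ gives $\nutb{\instance}{E}\le\nutb{\instance}{\{e_l\}}=\nut{\instance}{e_l}$ for each $l$, and minimizing over $l$ yields $\nutb{\instance}{E}\le\min_{l=1}^k\nut{\instance}{e_l}$.

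For part~(b) I fix an optimal $S^\ast$, a vector $\vec{\alpha}$ with total $\alpha<\min_{l}\nut{\instance}{e_l}$, and an arbitrary feasible $S$, and prove $\newcostname{\vec{\alpha},E}(S^\ast)\le\newcostname{\vec{\alpha},E}(S)$. In the sum case I split on whether $S^\ast\cap E\subseteq S$: if it holds, monotonicity closes the argument because $S$ inherits every $E$-increase carried by $S^\ast$ while $\costs{S^\ast}\le\costs{S}$; otherwise I pick $e_t\in(S^\ast\setminus S)\cap E$, note $e_t\in\ute$ and $\alpha<\nut{\instance}{e_t}$, and increase $e_t$ \emph{alone} by the full amount $\alpha$. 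Since $\alpha<\nut{\instance}{e_t}$ this keeps $S^\ast$ optimal against $S$, giving $\costs{S}\ge\costs{S^\ast}+\alpha\ge\newcostname{\vec{\alpha},E}(S^\ast)$, and $\newcostname{\vec{\alpha},E}(S)\ge\costs{S}$ finishes it. In the bottleneck case I instead branch on whether $\newcostname{\vec{\alpha},E}(S^\ast)$ equals $c^\ast=\costs{\instance}$ or exceeds it: in the former, $\newcostname{\vec{\alpha},E}(S)\ge\costs{S}\ge c^\ast=\newcostname{\vec{\alpha},E}(S^\ast)$ for every $S$; in the latter, the new cost of $S^\ast$ is attained at some $e_t\in S^\ast\cap E$ with $c(e_t)+\alpha_t>c^\ast$, and since increasing $e_t$ alone by $\alpha_t<\nut{\instance}{e_t}$ keeps $S^\ast$ optimal, we get $c(e_t)+\alpha_t\le\newcostname{\vec{\alpha},E}(S)$. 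Combining with part~(a) gives equality.

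For part~(c) I run the same scheme for products. Fix an optimal $S^\ast$, a vector $\vec{\alpha}$ with total $\alpha<(\sqrt[k]{m+1}-1)\min_l c(e_l)$ where $m:=\min_l\nut{\instance}{e_l}/c(e_l)$, and a feasible $S$; writing $\beta_l=\alpha_l/c(e_l)$, the new cost of any solution is its old cost times the product of $(1+\beta_l)$ over its $E$-elements. The monotone case $S^\ast\cap E\subseteq S$ is handled as before. Otherwise, cancelling the shared $E$-factors reduces $\newcostname{\vec{\alpha},E}(S^\ast)\le\newcostname{\vec{\alpha},E}(S)$ to the scalar inequality $\prod_{e_l\in(S^\ast\setminus S)\cap E}(1+\beta_l)\le\costs{S}/\costs{\instance}$. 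Every such $e_l$ lies in $\ute$ and is absent from $S$, so the single upper tolerance of $e_l$ for products yields $\costs{S}\ge\costs{\instance}\,(1+\nut{\instance}{e_l}/c(e_l))\ge\costs{\instance}\,(m+1)$; hence it suffices to bound the product by $m+1$. Setting $r:=|(S^\ast\setminus S)\cap E|\le k$ and $B:=\sum_{e_l\in(S^\ast\setminus S)\cap E}\beta_l\le\alpha/\min_l c(e_l)<\sqrt[k]{m+1}-1\le\sqrt[r]{m+1}-1$, the AM--GM inequality gives $\prod(1+\beta_l)\le(1+B/r)^r\le(1+B)^r<(\sqrt[r]{m+1})^r=m+1$, as required, so $\nutb{\instance}{E}\ge(\sqrt[k]{m+1}-1)\min_l c(e_l)$.

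The routine parts are the monotonicity cases and the AM--GM estimate. The main obstacle is the bottleneck subcase in~(b): because the bottleneck cost is a maximum rather than a sum, increasing a single element need not raise the cost at all, so one cannot simply transport a single-element inequality to the simultaneous perturbation. The fix is to branch on whether the perturbed $S^\ast$ still sits at $c^\ast$ or has been pushed strictly above it; this decouples the two regimes and lets the single-tolerance comparison apply exactly in the regime where the cost genuinely rises.
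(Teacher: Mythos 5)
Your proof is correct, but it takes a genuinely different route from the paper's. The paper disposes of parts (b) and (c) by a reduction: for the optimal solution $S^\ast$ witnessing the infimum in the definition of $\nutb{\instance}{E}$, only increases on elements of $\bar{E}:=E\cap S^\ast$ can destroy the optimality of $S^\ast$, so $\nutb{\instance}{E}=\nutb{\instance}{\bar{E}}$ with $\bar{E}\in\uts$, and then the known results for reverse set upper tolerances on $\uts$ (\cite[Theorems 3(b), 4]{JT24}) are invoked directly; the final step is just that a minimum over the subset $\bar{E}$ dominates the minimum over all of $E$. You instead re-derive the lower bounds from first principles via the ``preservation direction'': you fix an arbitrary optimal $S^\ast$, an arbitrary feasible competitor $S$, and any $\vec{\alpha}$ with total below the claimed threshold, split on whether $S^\ast\cap E\subseteq S$, and in the non-monotone case transport a single-element tolerance inequality for a witness $e_t\in(S^\ast\setminus S)\cap E$ (with the extra two-regime branch for the bottleneck case and the cancellation-plus-AM--GM computation for the product case). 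Your argument is longer but self-contained --- it effectively reproves the cited theorems of~\cite{JT24} along the way --- whereas the paper's is a short reduction that leans on prior work; both are valid, and your treatment has the side benefit of making explicit exactly where each objective type (sum, bottleneck, product) enters.
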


\begin{proof}
Let $S^\ast$ be an optimal solution corresponding to the definition
of $ \nutb{\instance}{E} $.

\begin{description}
\item[\bf (a)]
This follows directly from Remark~\ref{rem_newdef}(b).
\item[\bf (b)]
By (a), it remains to be shown that
$ \nutb{\instance}{E} \ge \; \min_{l=1}^k \, \{ \nutb{\instance}{e_l} \} $.

To make $ S^\ast $ non-optimal by increasing
some of the costs $ c(e_i) $ for $ i=1,2,\dots,k$ with infimum sum of increases,
we can restrict
to the case that only those $ c(e_i) $ are increased such that
$ e_i \in S^\ast $ and the other $ c(e_i) $ are not increased.
Then for $ \bar{E} := E \cap S^\ast $ it holds that $ \bar{E}
\in \uts $.
By~\cite[Theorem 3(b)]{JT24}
it follows:
\begin{eqnarray*}
\nutb{\instance}{E} \; = \;
\nutb{\instance}{\bar{E}} \; = \;
\min_{l=1,e_l \in \bar{E}}^k \{ \nut{\instance}{e_l} \} \; \ge \;
\min_{l=1}^k \{ \nut{\instance}{e_l} \}.
\end{eqnarray*}
\item[\bf (c)]
Analogously to (b), it holds for $ \bar{E} := E \cap S^\ast $ that $ \bar{E}
\in \uts $.
By~\cite[Theorem 4]{JT24} it follows:
\begin{eqnarray*}
& & \nutb{\instance}{E} \; = \;
\nutb{\instance}{\bar{E}}
\\
& \ge &
\left(\sqrt[k]{ \min_{l=1,e_l \in \bar{E}}^k
\left\{ \frac{\ut{\instance}{e_l} }{c(e_l)} \right\} +1 } -1 \right)
\cdot \min_{l=1, e_l \in \bar{E}}^k \, \{ c(e_l) \}
\\
& \ge &
\left(\sqrt[k]{ \min_{l=1}^k
\left\{ \frac{\ut{\instance}{e_l} }{c(e_l)} \right\} +1 } -1 \right)
\cdot \min_{l=1}^k \, \{ c(e_l) \}.
\hfill \qed
\end{eqnarray*}
\end{description}
\end{proof}

\subsubsection*{Summary}

The computation formulas of single upper tolerances cannot be extended to $\E$: we need
$\ute$ to determine whether the existing computation formulas should be used or whether the value
is infinite (see Remark~\ref{usi_comp2}(a)). The existing bounds still apply to reverse set upper
tolerances (see Theorem~\ref{urv_comp2}) but only partly to regular set
upper tolerances (see Theorem~\ref{urg_comp2}).

\section{Computation, bounds, and properties of the new lower tolerance definitions}
\label{sec_lowset}

In this section, we first show that the new definitions of lower tolerances are consistent with
the current ones for the domains on which the current definitions apply. Subsequently, we show that the
computation formulas for the current definitions apply to the new definitions for all elements in and for all subsets
of $\E$.

\smallskip

The following easily provable remark forms the basis for our results.

\begin{remark}
\label{rem_newdef2}
Let $ \instance =(\E,D,c,\costname)$ be a CMP instance
and $ E \subseteq \E $.
Then the following holds:
\begin{description}
\item[{\bf (a)}] $ \nlt{\instance}{\bar{E}} \le \nlt{\instance}{E} $ for
each $ \bar{E} \subseteq E $.
\item[{\bf (b)}] $ \nltb{\instance}{E} \le \nltb{\instance}{\bar{E}} $ for
each $ \bar{E} \subseteq E $.
\end{description}
\end{remark}

First, we show that the new lower tolerance definitions are consistent
with the current ones.

\begin{theorem}
\label{thm_cons}
Let $ \instance =(\E,D,c,\costname)$ be a CMP instance,
where $\costname $ is of type
sum, product, or bottleneck.
\begin{description}
\item[\bf (a)]
$ \nlt{\instance}{e} = \lt{\instance}{e} $ holds
for each $ e \in \lte $.
\item[\bf (b)]
$ \nlt{\instance}{E} = \lt{\instance}{E} $ holds for each $ E \in \lts$.

\item[\bf (c)]
$ \nltb{\instance}{E} = \ltb{\instance}{E} $ holds for each $ E \in \lts $.
\end{description}
\end{theorem}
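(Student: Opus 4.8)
The plan is to build all three equalities on a single structural fact: membership in $\lte$ (resp.\ $\lts$) furnishes an optimal solution that is \emph{disjoint} from the perturbed element (resp.\ set), and by Observation~\ref{obs_settol}(b) this disjoint solution attains the \emph{largest} perturbed cost among all originally optimal solutions. Concretely, for $E \in \lts$ I would fix an optimal $S_0^\ast$ with $E \subseteq \E \setminus S_0^\ast$. For every admissible $\vec{\alpha}$ (with $0 \le \alpha_l < \maxdec{e_l}$) the cost of $S_0^\ast$ is untouched, so $\newcostname{-\vec{\alpha},E}(S_0^\ast) = \costs{\instance}$, and by Observation~\ref{obs_settol}(b) every optimal $S^\ast$ satisfies $\newcostname{-\vec{\alpha},E}(S^\ast) \le \newcostname{-\vec{\alpha},E}(S_0^\ast) = \costs{\instance}$. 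This inequality is the bridge between the solution-based current definitions and the value-based new ones.

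For part (c) I would use this bridge to show that, for each admissible $\vec{\alpha}$, ``some optimal $S^\ast$ is non-optimal for $\newinstance{-\vec{\alpha}}{E}$'' is equivalent to ``$\newcosts{-\vec{\alpha},E}{\newinstance{-\vec{\alpha}}{E}} < \costs{\instance}$''. For ``$\Leftarrow$'', the disjoint $S_0^\ast$ then sits strictly above the perturbed optimum and so loses optimality; for ``$\Rightarrow$'', any losing $S^\ast$ satisfies $\newcosts{-\vec{\alpha},E}{\newinstance{-\vec{\alpha}}{E}} < \newcostname{-\vec{\alpha},E}(S^\ast) \le \costs{\instance}$. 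Hence the index sets of the two infima coincide and $\nltb{\instance}{E} = \ltb{\instance}{E}$. For part (b) the same bridge converts ``each optimal $S^\ast$ stays optimal for $\newinstance{-\vec{\alpha}}{E}$'' into ``$\newcosts{-\vec{\alpha},E}{\newinstance{-\vec{\alpha}}{E}} = \costs{\instance}$''. The point requiring care is the nested ``for each $S^\ast$ there exists $\vec{\alpha}$'' in the current definition: I would collapse it by noting that one $\vec{\alpha}$ keeping $S_0^\ast$ optimal already keeps \emph{every} optimal solution optimal (by the bridge, their perturbed costs are squeezed between the perturbed optimum and $\costs{\instance}$), so the witness need not depend on $S^\ast$ and the two supremum sets coincide.

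For part (a) I would specialize to $E = \{e\}$, noting that $e \in \lte$ guarantees a disjoint optimal solution so that the reduction above applies verbatim with a scalar $\alpha$. The only genuinely new element is the domain: $\lt{\instance}{e}$ ranges over $\alpha \in \R_0^+$, whereas $\nlt{\instance}{e}$ ranges over all $\alpha \in \R$. I would close this gap by observing that the perturbed optimum is non-increasing in $\alpha$ and equals $\costs{\instance}$ for every $\alpha \le 0$ (increasing $c(e)$ leaves the disjoint optimal solution, and hence the optimum, unchanged). Thus the feasible set of the new supremum is an interval $(-\infty, t]$ or $(-\infty, t)$ with $t \ge 0$, and restricting it to $\R_0^+$ leaves the supremum unchanged, giving $\nlt{\instance}{e} = \lt{\instance}{e}$.

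The main obstacle I anticipate is the quantifier collapse in part (b). The current regular set lower tolerance permits a separate cost vector for each optimal solution, while the new definition demands a single vector achieving the unperturbed optimal value; reconciling this asymmetry, and carefully proving both inclusions between the two supremum sets, is where Observation~\ref{obs_settol}(b) does the essential work by designating the disjoint solution as the universal worst case.
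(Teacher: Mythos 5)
Your proof is correct and follows essentially the same route as the paper's: both establish (b) and (c) by using Observation~\ref{obs_settol}(b) together with the cost-invariance of an optimal solution disjoint from $E$ to identify the feasible sets of the current and new definitions, and then obtain (a) by specializing to $E=\{e\}$. Your explicit handling of the quantifier collapse in (b) and of the domain $\R$ versus $\R_0^+$ in (a) merely spells out details the paper leaves implicit.
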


\begin{proof}
\begin{description}
\item[\bf (b), (c)]
Let $ E =\{e_1,e_2,\dots,e_k\} \in \lts $,
i.e., there exists an optimal solution $ S^\ast $ with
$E \subseteq \E \setminus S^\ast $.
Let $ \vec{\alpha} = (\alpha_1,\alpha_2,\dots,\alpha_k) $
with $ 0 \le \alpha_l < \maxdec{e_l} $ for $ l=1,2,\dots,k$.

The following statements are equivalent,
where the second statement follows from the first one by Observation~\ref{obs_settol}(b).

\begin{itemize}
\item All solutions $ S^\ast $ with $ E \subseteq \E \setminus S^\ast $
which are optimal for $ \instance $ are also optimal
for $ \newinstance{-\vec{\alpha}}{E}$.
\item All solutions which are optimal for $ \instance $ are also optimal
for $ \newinstance{-\vec{\alpha}}{E}$.
\item $ \newcosts{-\vec{\alpha},E}{\newinstance{-\vec{\alpha}}{E}} = \costs{\instance} $.
\end{itemize}

Then, (b) follows.

The following statements are equivalent,
where the first statement follows from the second one by Observation~\ref{obs_settol}(b).
\begin{itemize}
\item There exists a solution $ S^\ast $ with $ E \subseteq \E \setminus S^\ast $
which is optimal for $ \instance $ and is \emph{not} optimal
for $ \newinstance{-\vec{\alpha}}{E}$.
\item There exists a solution which is optimal for $ \instance $ and is \emph{not} optimal
for $ \newinstance{-\vec{\alpha}}{E}$.
\item $ \newcosts{-\vec{\alpha},E}{\newinstance{-\vec{\alpha}}{E}} < \costs{\instance} $.
\end{itemize}

Then, (c) follows.

\item[\bf (a) ]
This follows directly from (b) and (c) by setting $E := \{ e \}$.
\hfill $\qed$
\end{description}
\end{proof}

Parts (b), (c), (d) of the following theorem show
that~\cite[Theorem 11]{GJM06b}
holds for each $e \in \E$ for the new single lower tolerance definition~\eqref{newdef_sin_low}.

\begin{theorem}
\label{thm_lt_newdef}
Let $ e \in \E $.
\begin{description}
\item[\bf (a)]
Let~$ e \in \E \setminus \lte $.
Then $ \nlt{\instance}{e} \in \{0,\infty\} $
holds,
where if $\costname$ is of type sum or product,
the lower tolerance is always $0$.

\item[\bf (b)]
$\nlt{\instance}{e}=\costs{\Dplus{e}}-\costs{\instance}$, if
$\costname$ is of type sum,
\item[\bf (c)]
$\nlt{\instance}{e}=
\frac{\costs{\Dplus{e}}-\costs{\instance}}{\costs{\Dplus{e}}}\cdot c(e)$,
if $\costname$ is of type product,
\item[\bf (d)]
$\nlt{\instance}{e}=\left\{\begin{array}{cl}
c(e)-c^\ast, & \mbox{if $g(e) < c^\ast $}\\
\infty, & \mbox{otherwise}
\end{array} \right.$,
where $ c^\ast := \costs{\instance} $ and
\begin{eqnarray*}
g(e) := \left\{
\begin{array}{cc}
\min_{S \in \Dplus{e}} \max_{a\in S \setminus \{e\}} \{ c(a) \},
& \mbox{if $ \Dplus{e} \not= \emptyset $}
\\
\infty, & \mbox{if $ \Dplus{e} = \emptyset $}
\end{array}
\right.,
\end{eqnarray*}
if $\costname$ is of type bottleneck.

\end{description}
\end{theorem}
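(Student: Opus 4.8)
The plan is to reduce every part to a single scalar analysis of the value function $v(\alpha) := \newcosts{-\alpha,e}{\newinstance{-\alpha}{e}}$, the optimal cost of the instance obtained by lowering $c(e)$ by $\alpha$. Since decreasing one cost can never increase the optimal value, $v$ is non-increasing with $v(0)=c^\ast$, where $c^\ast := \costs{\instance}$; hence $v(\alpha)\le c^\ast$ for all $\alpha\ge 0$, and the two equivalent forms of~\eqref{newdef_sin_low} collapse to $\nlt{\instance}{e}=\sup\{\alpha\ge 0 : v(\alpha)=c^\ast\}$, i.e.\ the threshold at which $v$ first drops below $c^\ast$. First I would split $D$ into $\Dplus{e}$ and $\Dminus{e}$: solutions in $\Dminus{e}$ keep their unchanged cost, which is $\ge c^\ast$, so only solutions containing $e$ can pull $v$ below $c^\ast$, and $v(\alpha)$ equals the minimum of $\costs{\Dminus{e}}$ and the smallest modified cost over $S\in\Dplus{e}$.

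For parts (b) and (c) I would substitute the explicit modified cost of each $S\in\Dplus{e}$. For type sum it is $\costs{S}-\alpha$, so $v(\alpha)=\min\{\costs{\Dminus{e}},\costs{\Dplus{e}}-\alpha\}$; the bound from $\Dminus{e}$ being automatic, the requirement $v(\alpha)\ge c^\ast$ reduces to $\alpha\le\costs{\Dplus{e}}-c^\ast$, which is exactly (b). For type product the modified cost is $\costs{S}\cdot\frac{c(e)-\alpha}{c(e)}$, and the same comparison gives $\alpha\le\frac{\costs{\Dplus{e}}-c^\ast}{\costs{\Dplus{e}}}\,c(e)$, i.e.\ (c); here the convention $\costs{\emptyset}=\infty$ handles $\Dplus{e}=\emptyset$, and the formula automatically respects the admissible range $\alpha<\maxdec{e}=c(e)$.

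The bottleneck case (d) is the hard part, since the modified cost of $S\in\Dplus{e}$ is the nonlinear $\max\{h(S),c(e)-\alpha\}$ with $h(S):=\max_{a\in S\setminus\{e\}}c(a)$, making $v$ a minimum of such max-terms. The crucial observation I would use is that such an $S$ can fall strictly below $c^\ast$ only if both $h(S)<c^\ast$ and $c(e)-\alpha<c^\ast$ hold; minimizing over $S$ turns the first condition into $g(e)<c^\ast$ and leaves the second as $\alpha>c(e)-c^\ast$. Thus $v(\alpha)<c^\ast$ becomes achievable iff $g(e)<c^\ast$, in which case the infimal such $\alpha$ is $c(e)-c^\ast$, whereas $g(e)\ge c^\ast$ (including $\Dplus{e}=\emptyset$, where $g(e)=\infty$) keeps every solution at cost $\ge c^\ast$ and yields $\infty$ --- precisely the split in (d). The delicate points will be disentangling the two thresholds inside the nested $\min$/$\max$ and checking $c(e)-c^\ast\ge 0$, which holds because any witness $S$ with $h(S)<c^\ast$ has original cost $\max\{h(S),c(e)\}\ge c^\ast$, forcing $c(e)\ge c^\ast$.

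Finally, part (a) follows by specializing to $e\in\E\setminus\lte=\bigcap_{S^\ast\in D^\ast}S^\ast$, where every optimal solution contains $e$ and hence $\costs{\Dplus{e}}=c^\ast$: formulas (b) and (c) then return $0$ for the sum and product types. For bottleneck, (d) gives either $\infty$ (when $g(e)\ge c^\ast$) or $c(e)-c^\ast$; but $e\in S^\ast$ forces $c(e)\le c^\ast$, while the witness argument above forces $c(e)\ge c^\ast$ on the finite branch, so that branch collapses to $0$ and the dichotomy $\nlt{\instance}{e}\in\{0,\infty\}$ emerges. As an alternative organization, I could obtain (b)--(d) on $\lte$ for free by combining \cite[Theorem 11]{GJM06b} with the consistency statement Theorem~\ref{thm_cons}(a), leaving only the verification on $\E\setminus\lte$ just carried out.
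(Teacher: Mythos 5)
Your proposal is correct, and its main line is a genuinely different route from the paper's. The paper dispatches the case $e\in\lte$ by citing the known formulas from the prior work (together with the consistency result Theorem~\ref{thm_cons}(a)) and only argues from scratch on $\E\setminus\lte$, where everything degenerates: there $\costs{\Dplus{e}}=\costs{\instance}$, so for sum and product a direct computation of $\newcosts{-\alpha,e}{\newinstance{-\alpha}{e}}$ gives tolerance $0$, and for bottleneck a two-case analysis (does some $\alpha$ lower the objective or not) yields $0$ or $\infty$; part (a) is then read off from these computations. You instead prove (b)--(d) uniformly for \emph{all} $e\in\E$ by analyzing the monotone value function $v(\alpha)$ and splitting $D$ into $\Dplus{e}$ and $\Dminus{e}$, which makes the formulas fall out as the threshold where $v$ first drops below $\costs{\instance}$; part (a) then becomes an honest corollary. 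Your approach is self-contained (it does not lean on \cite[Theorem 11]{GJM06b}) and it makes transparent why the formulas hold, e.g.\ your witness argument that $g(e)<c^\ast$ forces $c(e)\ge c^\ast$ is exactly the fact the paper uses implicitly when it asserts in case (d)(ii) that $c(e)<c^\ast$ would keep the objective value unchanged; the paper's version is shorter precisely because it outsources the substantive case to the earlier reference. Your closing remark about combining \cite[Theorem 11]{GJM06b} with Theorem~\ref{thm_cons}(a) and verifying only $\E\setminus\lte$ is, in fact, the paper's proof. One small point worth making explicit if you write this up: in the product case the supremum is taken over the admissible range $0\le\alpha<\maxdec{e}=c(e)$, and when $\Dplus{e}=\emptyset$ the convention $\costs{\emptyset}=\infty$ makes your formula return $c(e)$, which is indeed the supremum of that range --- so the boundary bookkeeping you flag does close.
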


\begin{proof}
For (a), the case $ e \in \lte $ does not occur.
For (b), (c), (d), the case $ e \in \lte $ is covered by~\cite[Theorem 11]{GJM06b},
Thus, in the following, let~$ e \in \E \setminus \lte $, i.e., $e$
lies inside each optimal solution.
\begin{description}
\item[(b)] $\costname$ is of type sum.

Let $ \alpha > 0 $ be arbitrary.
Then it holds that
$\newcosts{-\alpha,e}{\newinstance{-\alpha}{e}} = \costs{\instance} - \alpha $.
Thus, decreasing $ c(e) $ by $ \alpha $ decreases the objective value, and
$ \nlt{\instance}{e} = 0$ follows.

As $e$ lies in each optimal solution, it follows:
\begin{eqnarray*}
\nlt{\instance}{e} \; = \; 0 \; = \;
\costs{\Dplus{e}}-\costs{\instance}.
\end{eqnarray*}
Thus, (b) holds.

\item[(c)] $\costname$ is of type product.

Let $ \alpha $ be arbitrary with $ 0 < \alpha < \maxdec{e} $.
Then it holds that
$\newcosts{-\alpha,e}{\newinstance{-\alpha}{e}} = \costs{\instance} \cdot
\frac{c(e)-\alpha}{c(e)} $.
As all elements have positive cost, decreasing $ c(e) $ by $ \alpha $
decreases the objective value, and
$ \nlt{\instance}{e} = 0$ follows.

As $e$ lies in each optimal solution, it follows:
\begin{eqnarray*}
\nlt{\instance}{e}
\; = \; 0 \; = \;
\frac{\costs{\Dplus{e}}-\costs{\instance}}{\costs{\Dplus{e}}}\cdot c(e).
\end{eqnarray*}
Thus, (c) holds.

\item[(d)] $\costname$ is of type bottleneck.

Clearly, $ c(e) \le c^\ast~$ holds.
We distinguish between two cases:
\begin{description}
\item[{\bf i)}]
There does not exist an~$ \alpha > 0~$ with
$\newcosts{-\alpha,e}{\newinstance{-\alpha}{e}} < \costs{\instance} $.

It follows that~$ \nlt{\instance}{e} = \infty $.
It holds that $ g(e) \ge c^\ast$, and thus, the correctness of the assertion follows
in this case.

\item[{\bf ii)}]
There exists an~$ \alpha > 0~$ with
$\newcosts{-\alpha,e}{\newinstance{-\alpha}{e}} < \costs{\instance} $.

If~$c(e) < c^\ast$ holds, it follows that
$\newcosts{-\alpha,e}{\newinstance{-\alpha}{e}}= \costs{\instance} = c^\ast$
for all $\alpha > 0$, which conflicts with the assumption.
Therefore, it holds that $ c(e) = c^\ast~$ and
for all $ \alpha > 0~$ that
$\newcosts{-\alpha,e}{\newinstance{-\alpha}{e}} < \costs{\instance} $.
It follows that~$ \nlt{\instance}{e} = 0 $.

It holds that $ g(e) < c^\ast$. Because of that and because of
$ c(e)=c^\ast$, the correctness of the assertion follows in this case.

\end{description}
\item[(a)] This follows directly from the proofs of (b), (c), (d).
\hfill $\qed$
\end{description}
\end{proof}

Note that each single lower tolerance of a CSP and a CPP requires the solution of
a new instance. This instance can be formed by setting $c(e)$ to a sufficiently
small number~\cite[Theorem 10(a),(b)]{GJM06b}.  The computation of a
single lower tolerance of a CBP requires the solution of either an additional CSP or
an additional CBP instance~\cite[Algorithm 7, 8]{TJ22}.

If $\costname$ is of type bottleneck, we use the following observation.

\begin{observation}
\label{obs_bottl}
Let $ \instance =(\E,D,c,\costname)$ be a CBP instance
with optimal objective value $ c^\ast $ and $ E = \{e_1,e_2,\dots,e_k\} \subseteq \E $.
Let $ \vec{\alpha} = (\alpha_1,\alpha_2,\dots,\alpha_k) $.
Then the following holds:
\begin{eqnarray*}
\newcosts{-\vec{\alpha},E}{\newinstance{-\vec{\alpha}}{E}} & \ge &
\min \left\{ c^\ast, \min_{l=1}^k \left\{ c(e_l) - \alpha_l \right\} \right\}.
\end{eqnarray*}
\end{observation}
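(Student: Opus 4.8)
The plan is to prove the claimed lower bound on the new optimal objective value by bounding the new cost of \emph{every} feasible solution from below by the right-hand side, and then taking the minimum over all feasible solutions. Recall that for a CBP the cost of a feasible solution $S$ is $\costs{S} = \max_{e \in S} c(e)$, and that $\newcosts{-\vec{\alpha},E}{\newinstance{-\vec{\alpha}}{E}} = \min_{S \in D} \max_{e \in S} c_{-\vec{\alpha},E}(e)$, where the modified cost function $c_{-\vec{\alpha},E}$ leaves the cost of each $e \in \E \setminus E$ unchanged and replaces the cost of $e_l \in E$ by $c(e_l) - \alpha_l$. Writing $m := \min\{c^\ast, \min_{l=1}^k \{c(e_l) - \alpha_l\}\}$, it therefore suffices to show that $\max_{e \in S} c_{-\vec{\alpha},E}(e) \ge m$ for an arbitrary feasible $S$.

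First I would fix an arbitrary feasible solution $S \in D$ and split into two cases according to whether $S$ meets $E$. If $S \cap E = \emptyset$, then none of the costs of the elements of $S$ are modified, so the new cost of $S$ equals its original cost $\costs{S} = \max_{e \in S} c(e) \ge c^\ast \ge m$, where the first inequality holds because $c^\ast$ is the minimum original cost over all feasible solutions. If instead $S \cap E \neq \emptyset$, I would pick any $e_l \in S \cap E$; since the bottleneck cost is a maximum over $S$ and $e_l \in S$, the new cost of $S$ is at least $c_{-\vec{\alpha},E}(e_l) = c(e_l) - \alpha_l \ge \min_{j=1}^k \{c(e_j) - \alpha_j\} \ge m$.

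In both cases the new cost of $S$ is at least $m$, and since this holds for every feasible $S$, the minimum over $D$ — which is exactly $\newcosts{-\vec{\alpha},E}{\newinstance{-\vec{\alpha}}{E}}$ — is also at least $m$, giving the claim. I would note in passing that $D \neq \emptyset$ (guaranteed by $D^\ast \neq \emptyset$) and that every feasible solution is nonempty, so all the maxima and the outer minimum are well defined, and that the decrease keeps the instance of type bottleneck since $\maxdec{e} = \infty$ for a CBP.

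This argument involves no real obstacle; the only point requiring care is the case distinction, and in particular the recognition that under a bottleneck objective the maximum over a set is bounded below by the (reduced) cost of any single chosen element of that set. The bound is deliberately loose because it ignores the interactions between the decreased elements, which is precisely what makes it a convenient and robust lower bound for the subsequent results on new lower tolerances.
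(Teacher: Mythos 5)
Your proof is correct. The paper states this observation without proof (treating it as immediate), and your argument — bounding the modified bottleneck cost of an arbitrary feasible solution $S$ from below via the two cases $S \cap E = \emptyset$ (cost unchanged, hence $\ge c^\ast$) and $S \cap E \neq \emptyset$ (cost at least $c(e_l) - \alpha_l$ for some $l$) — is exactly the intended elementary justification.
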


The following theorem is a generalization of \cite[Theorem 17]{JT18}
applied to the newly defined regular set lower tolerance~\eqref{newdef_reg_lowb} for
arbitrary $ E \subseteq \E $.

\begin{theorem}
\label{lrg_comp2}
Let $ \instance =(\E,D,c,\costname)$ be a CMP instance
and $ E = \{e_1,e_2,\dots,e_k\} \subseteq \E $.
Then the following holds:
\begin{description}
\item[{\bf (a)}]
$ \max_{l=1}^k \, \{ \nlt{\instance}{e_l} \} \; \le \;
\nlt{\instance}{E} \; \le \; \sum_{l=1}^k \nlt{\instance}{e_l} $.
\item[{\bf (b)}]
If $\costname$ is of type bottleneck,
$ \nlt{\instance}{E} \; = \;
\sum_{l=1}^k \nlt{\instance}{e_l} $.
\end{description}
\end{theorem}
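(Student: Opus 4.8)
The plan is to establish the three inequalities separately, in each case reducing to the single-element lower tolerances already computed in Theorem~\ref{thm_lt_newdef}. First I would dispose of the lower bound $\max_l \nlt{\instance}{e_l} \le \nlt{\instance}{E}$ in (a): this is immediate from the monotonicity Remark~\ref{rem_newdef2}(a) applied to the singletons $\bar{E}=\{e_l\}\subseteq E$, since $\nlt{\instance}{\{e_l\}}=\nlt{\instance}{e_l}$.

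For the upper bound $\nlt{\instance}{E}\le\sum_l\nlt{\instance}{e_l}$ the key is a \emph{per-element bound}: for every admissible $\vec{\alpha}$ realizing the defining condition $\newcosts{-\vec{\alpha},E}{\newinstance{-\vec{\alpha}}{E}}=\costs{\instance}$, I would show $\alpha_l\le\nlt{\instance}{e_l}$ for each $l$; summing and passing to the supremum then gives the bound. Since decreasing costs cannot raise the optimal value, the condition is equivalent to ``every feasible $S$ keeps new cost $\ge\costs{\instance}$''. If $\Dplus{e_l}=\emptyset$ or $\nlt{\instance}{e_l}=\infty$ the bound is vacuous, so I may assume all $\nlt{\instance}{e_l}$ are finite and each $\Dplus{e_l}\neq\emptyset$. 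For type sum (resp.\ product), I would pick $S^{(l)}\in\Dplus{e_l}$ attaining $\costs{\Dplus{e_l}}$; because the remaining decreases $\alpha_j\ge 0$ only lower the cost of $S^{(l)}$ further (additively for sum, by factors $\le 1$ for product, using $\alpha_j<\maxdec{e_j}=c(e_j)$), the inequality $\newcosts{-\vec{\alpha},E}{S^{(l)}}\ge\costs{\instance}$ isolates $\alpha_l$ and yields $\alpha_l\le\costs{\Dplus{e_l}}-\costs{\instance}$ (resp.\ the formula of Theorem~\ref{thm_lt_newdef}(c)), i.e.\ $\alpha_l\le\nlt{\instance}{e_l}$.

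The delicate part is the bottleneck case of this per-element bound, which I expect to be the main obstacle. Write $c^\ast=\costs{\instance}$ and take the witness $T_l\in\Dplus{e_l}$ with $\max_{a\in T_l\setminus\{e_l\}}c(a)=g(e_l)<c^\ast$. A short remark shows $g(e_j)<c^\ast$ forces $c(e_j)\ge c^\ast$ (the cost of a witness is $\max(g(e_j),c(e_j))$, which must be $\ge c^\ast$ by feasibility); hence, under our finiteness assumption, $c(e_j)\ge c^\ast$ for all $j$. Since every element of $T_l$ other than $e_l$ has cost $\le g(e_l)<c^\ast$, the set $T_l$ can contain no element of $E$ besides $e_l$. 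Therefore the new cost of $T_l$ is exactly $\max\{g(e_l),\,c(e_l)-\alpha_l\}$, and requiring this to be $\ge c^\ast$ with $g(e_l)<c^\ast$ gives $\alpha_l\le c(e_l)-c^\ast=\nlt{\instance}{e_l}$. Verifying this ``cleanliness'' of $T_l$ is precisely what makes the max-objective behave additively, and it is the step that needs care.

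Finally, for (b) the inequality $\nlt{\instance}{E}\le\sum_l\nlt{\instance}{e_l}$ is the bottleneck instance of (a) just proved. For the reverse, if some $\nlt{\instance}{e_l}=\infty$ then $\nlt{\instance}{E}\ge\max_l\nlt{\instance}{e_l}=\infty$ by (a) and equality is trivial; otherwise every $c(e_l)\ge c^\ast$, and I would exhibit the explicit admissible vector $\alpha_l:=c(e_l)-c^\ast$. By Observation~\ref{obs_bottl}, $\newcosts{-\vec{\alpha},E}{\newinstance{-\vec{\alpha}}{E}}\ge\min\{c^\ast,\min_l(c(e_l)-\alpha_l)\}=c^\ast$, while decreasing costs keeps the value $\le c^\ast$; hence equality holds, $\vec{\alpha}$ is admissible, and $\nlt{\instance}{E}\ge\sum_l(c(e_l)-c^\ast)=\sum_l\nlt{\instance}{e_l}$. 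Everything besides the bottleneck per-element argument is routine given Theorem~\ref{thm_lt_newdef}, the consistency with \cite[Theorem 17]{JT18} on $\lts$, and the monotonicity Remark~\ref{rem_newdef2}.
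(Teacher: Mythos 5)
Your proof is correct, and its skeleton coincides with the paper's: the left inequality of (a) via Remark~\ref{rem_newdef2}(a), the right inequality via a per-element bound $\alpha_l \le \nlt{\instance}{e_l}$ for every admissible $\vec{\alpha}$ with $\newcosts{-\vec{\alpha},E}{\newinstance{-\vec{\alpha}}{E}} = \costs{\instance}$, and (b) via the explicit vector $\alpha_l = c(e_l)-c^\ast$ combined with Observation~\ref{obs_bottl}. Where you genuinely diverge is in how the per-element bound is established. The paper obtains it in two lines, uniformly for all three objective types: if $\alpha_t > \nlt{\instance}{e_t}$ for some $t$, then $\newcosts{-\alpha_t,e_t}{\newinstance{-\alpha_t}{e_t}} < \costs{\instance}$ by the infimum characterization of the single lower tolerance, and decreasing the remaining costs can only push the optimal value further down, so $\newcosts{-\vec{\alpha},E}{\newinstance{-\vec{\alpha}}{E}} \le \newcosts{-\alpha_t,e_t}{\newinstance{-\alpha_t}{e_t}} < \costs{\instance}$, contradicting admissibility. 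You instead re-derive the bound from the closed-form values of Theorem~\ref{thm_lt_newdef} using witness solutions in $\Dplus{e_l}$, which forces a type-by-type analysis and, in the bottleneck case, the extra ``cleanliness'' step that the witness $T_l$ meets $E$ only in $e_l$. That step is sound (since all $\nlt{\instance}{e_j}$ may be assumed finite, $g(e_j)<c^\ast$ forces $c(e_j)\ge c^\ast$, while every element of $T_l\setminus\{e_l\}$ has cost $\le g(e_l)<c^\ast$, so indeed $T_l\cap E=\{e_l\}$), and your product and sum computations are also correct; but the route through the explicit formulas is strictly more work and less robust than the monotonicity-plus-definition argument, which needs neither Theorem~\ref{thm_lt_newdef} nor any case distinction. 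If you keep your version, the bottleneck cleanliness argument is the one step you must spell out carefully, exactly as you anticipated.
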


\begin{proof}
\begin{description}
\item[\bf (a)]
We show the claimed inequalities.
\item[]
\hspace*{1em}
$ \max_{l=1}^k \, \{
\nlt{\instance}{e_l} \} \le \nlt{\instance}{E} $.

This follows directly from Remark~\ref{rem_newdef2}(a).

\item[]
\hspace*{1em}
$ \nlt{\instance}{E} \le
\sum_{l=1}^k \nlt{\instance}{e_l} $.

Trivially, the inequality is true
if $ \sum_{l=1}^k \nlt{\instance}{e_l} = $ $ \infty $.
In the following let
$ \sum_{l=1}^k \nlt{\instance}{e_l} $ $ \not=
\infty $.
Assume that $ \nlt{\instance}{E} >
\sum_{l=1}^k \nlt{\instance}{e_l} $.
Choose $ \alpha \in \R $ with
$ \sum_{l=1}^k
\nlt{\instance}{e_l} < \alpha \le
\nlt{\instance}{E} $,
$ \alpha = \sum_{l=1}^k
\alpha_l $,
$ \vec{\alpha} =
(\alpha_1,\alpha_2,\dots,
\alpha_k) $
and $ 0 \le
\alpha_l <
\maxdec{e_l} $
for $ l=1,2,\dots,k $
so that
$ \newcosts{-\vec{\alpha},E}{\newinstance{-\vec{\alpha}}{E}} \; = \;
\costs{\instance} $ holds.
Then a $ t \in \{1,2,\dots,k\} $ exists with $
\alpha_t
> \nlt{\instance}{e_t} $.
By the definition of $ \nlt{\instance}{e_t} $, namely \eqref{newdef_rev_lowb},
it holds:
\begin{eqnarray*}
\newcosts{-\vec{\alpha},E}{\newinstance{-\vec{\alpha}}{E}} \; \le \;
\newcosts{-\alpha_l,e_l}{\newinstance{-\alpha_l}{e_l}} \; < \;
\costs{\instance}.
\end{eqnarray*}
We receive a contradiction
to $ \newcosts{-\vec{\alpha},E}{\newinstance{-\vec{\alpha}}{E}} \; = \;
\costs{\instance} $.
The correctness of the assertion follows.

\item[{\bf (b)}]
By {\bf (a)}, it remains to be shown that $ \sum_{l=1}^k \nlt{\instance}{e_l} \le
\nlt{\instance}{E} $.

Trivially, the inequality is true if $ \nlt{\instance}{E} = \infty $.
If $ \sum_{l=1}^k \nlt{\instance}{e_l} = \infty $ holds, then
it also holds that $ \max_{l=1}^k \, \{ \nlt{\instance}{e_l} \} = \infty $.
From (a)
it follows that $ \nlt{\instance}{E} = \infty $, and the inequality is also
true. In the following let both terms
$ \nlt{\instance}{E} $ and $ \sum_{l=1}^k \nlt{\instance}{e_l} $
be not equal to $ \infty $.

Let $ \alpha \ge 0 $ with
$ \alpha = \sum_{l=1}^k \alpha_l,
\; \vec{\alpha} = (\alpha_1,\alpha_2,\dots,\alpha_k) $
and $ \alpha_l = \nlt{\instance}{e_l} $ for $ l = 1,2,\dots,k $.
By Theorem~\ref{thm_lt_newdef}(d), $  \alpha_l = \nlt{\instance}{e_l} =
c(e_l) - c^\ast  \not= \infty $ holds.
It follows:
\begin{eqnarray*}
\newcosts{-\vec{\alpha},E}{\newinstance{-\vec{\alpha}}{E}} & \ge &
\min \{ c^\ast, \min_{l=1}^k \{ c(e_l) - (c(e_l) - c^\ast) \} \}
\\
& = & c^\ast \; = \; \costs{\instance}.
\end{eqnarray*}
Note that the first inequality holds for each possible
$ \vec{\alpha} = (\alpha_1,\alpha_2,\dots,\alpha_k) $
by Observation~\ref{obs_bottl} and specifically for $  \alpha_l = c(e_l) - c^\ast $
for $ l = 1,2,\dots,k $.
Equality and the correctness of the assertion follow.
\hfill $\qed$
\end{description}
\end{proof}

The following theorem is a generalization of~\cite[Theorem 9]{JT24}
applied to the newly defined reverse set lower tolerance~\eqref{newdef_rev_lowb} for
arbitrary $ E \subseteq \E $.

\begin{theorem}
\label{lrv_comp2}
Let $ \instance =(\E,D,c,\costname)$ be a CMP instance
and $ E = \{e_1,e_2,\dots,e_k\} \subseteq \E $.
Then the following holds:
\begin{description}
\item[{\bf (a)}]
$ \nltb{\instance}{E} \; \le \;
\min_{l=1}^k \, \{ \nlt{\instance}{e_l} \}$.
\item[{\bf (b)}]
If $\costname$ is of type sum or product,
$ \nltb{\instance}{E} \; = \;
\min_{l=1}^k \, \{ \nlt{\instance}{e_l} \} $.
\end{description}
\end{theorem}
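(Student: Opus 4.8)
The plan is to prove the two inequalities of (a) and the matching reverse inequality of (b) separately. For the upper bound (a), I would apply Remark~\ref{rem_newdef2}(b) with the singleton $\bar E=\{e_l\}$, which gives $\nltb{\instance}{E}\le\nltb{\instance}{\{e_l\}}$ for every $l$. Specializing the definition \eqref{newdef_rev_lowb} to a singleton reproduces the infimum form of the single lower tolerance \eqref{newdef_sin_low}, so $\nltb{\instance}{\{e_l\}}=\nlt{\instance}{e_l}$; the only discrepancy, the constraint $0\le\alpha<\maxdec{e_l}$, is harmless, since an increase of $c(e_l)$ never lowers the objective and, for product type, the relevant infimum equals $\nlt{\instance}{e_l}<c(e_l)=\maxdec{e_l}$ whenever it is finite. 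Taking the minimum over $l$ then yields (a). This part is routine.

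For (b) it remains, by (a), to establish $\nltb{\instance}{E}\ge\min_{l=1}^k\nlt{\instance}{e_l}=:m$, which I would do by contradiction. If $\nltb{\instance}{E}<m$, then by the infimum in \eqref{newdef_rev_lowb} there is a vector $\vec\alpha$ with $0\le\alpha_l<\maxdec{e_l}$ and $\sum_{l=1}^k\alpha_l<m$ for which the decreased instance admits a feasible solution $S$ of cost strictly below $\costs{\instance}$. Setting $T:=\{l:e_l\in S\}$, which is nonempty (otherwise the cost of $S$ is unchanged), I record that $S\in\Dplus{e_t}$, hence $\costs{S}\ge\costs{\Dplus{e_t}}$, for every $t\in T$. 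In the sum case the new cost of $S$ is $\costs{S}-\sum_{l\in T}\alpha_l$; picking any $t\in T$ and combining $\costs{S}\ge\costs{\Dplus{e_t}}$ with Theorem~\ref{thm_lt_newdef}(b), which states $\nlt{\instance}{e_t}=\costs{\Dplus{e_t}}-\costs{\instance}$, gives $\sum_{l\in T}\alpha_l>\nlt{\instance}{e_t}\ge m$, contradicting $\sum_{l\in T}\alpha_l\le\sum_{l=1}^k\alpha_l<m$.

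The product case is the main obstacle, because the decreases now act multiplicatively: the new cost of $S$ is $\costs{S}\prod_{l\in T}\frac{c(e_l)-\alpha_l}{c(e_l)}$, and this must be reconciled with the additive budget $\sum_l\alpha_l$. Here I would let $t^\ast\in T$ minimize $c(e_l)$ over $T$, use $\costs{S}\ge\costs{\Dplus{e_{t^\ast}}}$ together with Theorem~\ref{thm_lt_newdef}(c) rewritten as $\frac{\costs{\instance}}{\costs{\Dplus{e_{t^\ast}}}}=1-\frac{\nlt{\instance}{e_{t^\ast}}}{c(e_{t^\ast})}$, and convert the product to a sum via the elementary inequality $\prod_{l\in T}(1-\beta_l)\ge 1-\sum_{l\in T}\beta_l$ (valid for $\beta_l\in[0,1)$) with $\beta_l=\alpha_l/c(e_l)$. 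This produces $\sum_{l\in T}\alpha_l/c(e_l)>\nlt{\instance}{e_{t^\ast}}/c(e_{t^\ast})$, and the bound $\sum_{l\in T}\alpha_l/c(e_l)\le\frac{1}{c(e_{t^\ast})}\sum_{l\in T}\alpha_l$ — where choosing $t^\ast$ as the minimum-cost index of $T$ is essential — forces $\sum_{l\in T}\alpha_l>\nlt{\instance}{e_{t^\ast}}\ge m$, again contradicting $\sum_{l\in T}\alpha_l<m$. The delicate point is exactly this matching of the multiplicative structure to the $1$-norm budget; the product inequality and the min-cost choice of $t^\ast$ are what make the two sides line up.
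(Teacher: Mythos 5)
Your proof is correct, but for part (b) it takes a genuinely different route from the paper. The paper argues by a dichotomy on whether some $e_t$ lies in an optimal solution: if yes, then for sum/product type $\nlt{\instance}{e_t}=0$, so the minimum is $0$ and the assumed strict inequality contradicts $\nltb{\instance}{E}\ge 0$; if no, then $E\in\lts$ and the claim is imported wholesale from the earlier result for the original definition (Theorem 9(b) of the reverse-set-tolerance paper) via the consistency statement Theorem~\ref{thm_cons}(c). You instead give a self-contained argument: extract a witness solution $S$ whose perturbed cost drops below $\costs{\instance}$, restrict to the indices $T$ of elements of $E$ actually in $S$, and compare against the computation formulas of Theorem~\ref{thm_lt_newdef}(b),(c), with the Weierstrass inequality $\prod_{l\in T}(1-\beta_l)\ge 1-\sum_{l\in T}\beta_l$ and the minimum-cost choice of $t^\ast$ reconciling the multiplicative perturbation with the additive budget in the product case. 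Your route buys independence from the external reference (it effectively reproves that prior result in passing) and treats the two cases of the paper's dichotomy uniformly; the paper's route is shorter and pushes all the analytic work into already-established machinery. One small point worth making explicit in your part (a): the identification $\nltb{\instance}{\{e_l\}}=\nlt{\instance}{e_l}$ rests on the same singleton specialization the paper itself asserts when deriving \eqref{newdef_sin_low} from \eqref{newdef_rev_lowb}, and your justification that the constraint $\alpha<\maxdec{e_l}$ is immaterial (cost increases never lower the objective; for product type the finite infimum lies below $c(e_l)$) is sound.
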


\pagebreak[3]

\begin{proof}
\begin{description}
\item[\bf (a)]
This follows directly from Remark~\ref{rem_newdef2}(b).

\item[\bf (b)]
By {\bf (a)} it remains to be shown  that
$ \min_{l=1}^k \, \{ \nlt{\instance}{e_l} \} \le \nltb{\instance}{E} $.

Trivially, the inequality is true if  $ \nltb{\instance}{E} = \infty $.
In the following let $ \nltb{\instance}{E} \not= \infty $.
Assume that $ \min_{l=1}^k \, \{ \nlt{\instance}{e_l} \} > \nltb{\instance}{E} $.

\pagebreak[3]

We distinguish between two cases:
\begin{description}
\item[(A)] There is at least one $ t \in \{1,2,\dots,k\} $ such that $e_t$ lies
in an optimal solution.

As for an objective function of type sum or product, decreasing the cost of an
element of an optimal solution by any $ \epsilon > 0 $
decreases the objective value, it follows:
$ 0 \; = \; \min_{l=1}^k \, \{ \nlt{\instance}{e_l} \}
\; > \; \nltb{\instance}{E} \; \ge \; 0 $.
This is a contradiction, and the correctness of the assertion follows.

\item[(B)] There is no $ t \in \{1,2,\dots,k\} $ such that $e_t$ lies in
an optimal solution.

It follows that $ \{e_1,e_2,\dots,e_k\} \in \lts $.
The correctness of the assertion follows by~\cite[Theorem 9(b)]{JT24}
and by Theorem~\ref{thm_cons}(c).
\hfill $\qed$
\end{description}
\end{description}
\end{proof}

\subsubsection*{Summary}

Our results show that the new definitions are consistent with the current ones
(see Theorem~\ref{thm_cons}). The computation formulas for single lower tolerances can be generalized
to $\E$ (see Theorem~\ref{thm_lt_newdef}(b),(c),(d)) and it does not require the set
$\lte$. Finally, the bounds and exact results for set lower tolerances
apply to all $E \subseteq \E$ (see Theorems~\ref{lrg_comp2}, \ref{lrv_comp2}).

\section*{Conclusions and future work}

In this work, we have addressed the challenge that current definitions of tolerances do not
apply to all elements or subsets of elements in combinatorial optimization problems. We
need to distinguish between upper tolerances, which concern cost increases of elements,
and lower tolerances, which concern cost decreases of elements.

The direct way to resolve this challenge is by extending the current definitions
to all elements. We use these extended definitions for upper tolerances,
in which the supremum
cost increases (of subsets) of elements are measured, but we find that current
computation formulas and bounds cannot be extended. In practice, it means that one first
has to determine whether elements belong to or are outside a given optimal solution
to compute upper tolerances.

For lower tolerances, on the other hand, we propose new definitions
based on the optimal objective value. We find that these new definitions are
consistent with the current definitions on its domain, and we find that the
computation formulas and bounds apply to all (subsets of) elements, meaning that we
no longer have to determine which elements belong to or are outside an optimal
solution for lower tolerances. Unfortunately, this approach cannot be applied to upper tolerances
in a meaningful way.

We hope that using this extended theory of set tolerances, heuristical
and exact algorithms will become more effective and easier to create.

\section*{Acknowledgement}

This work was supported by the Swedish Research Council grant 2022-04535.

\bibliographystyle{plain}
\bibliography{newdeftol37}

\end{document}